\def\bee{\begin{enumerate}}\def\eee{\end{enumerate}}
\def\bei{\begin{itemize}}\def\eei{\end{itemize}}
\newcommand{\nco}{\newcommand}
\def\R{\mathbb{R}}
\nco{\red}{\color{red}}
\nco{\blue}{\color{blue}}
\nco{\cyan}{\color{cyan}}
\nco{\brown}{\color{Magenta}}
\nco{\magenta}{\color{magenta}}
\nco{\violet}{\color{violet}}
\nco{\olive}{\color{Emerald}}
\nco{\orange}{\color{orange}}
\nco{\redend}{\normalcolor}
\nco{\blueend}{\normalcolor}
\def\inv#1{\frac{1}{#1}}
\def\tr{{\rm tr}\,}
\def\sign{{\rm sign}\,}
\def\ommit#1{{}}
\def\({\left(}
\def\){\right)}
\def\ie{{\it i.e.,\/}\ }
\def\ie{{\rm i.e.,\/}\ }
\def\smat#1{\mbox{\footnotesize{\mbox{$\begin{pmatrix}#1\end{pmatrix}$}}}}
\nco{\rnc}{\renewcommand}
\rnc{\title}[1]{{\Large\bf\mbox{}\\\medskip#1\bigskip\medskip\\}}
\rnc{\author}[1]{{\large #1\smallskip\\}}
\nco{\address}[1]{{\em #1\medskip\\}}
\def\diag{{\rm diag \,}}
\def\ii{\mathrm{i\,}}
\nco{\bun}{{\bf 1}}
\def\be{\begin{equation}}\def\ee{\end{equation}}
\def\bea{\begin{eqnarray}}\def\eea{\end{eqnarray}}
\def\bee{\begin{enumerate}}\def\eee{\end{enumerate}}
\def\bei{\begin{itemize}}\def\eei{\end{itemize}}
\def\oh{\frac{1}{2}}
\def\ommit#1{{}}
\def\SU{{\rm SU}}\def\U{{\rm U}}\def\SO{{\rm SO}}
\def\inv#1{\frac{1}{#1}}
\def\mult{{\rm mult}}
\def\tr{{\rm tr\, }} 
\def\smat#1{\mbox{\footnotesize{\mbox{$\begin{pmatrix}#1\end{pmatrix}$}}}}
\def\eq=#1{\buildrel #1 \over{=}}
\def\CC{{\mathcal C}}
\def\CM{{\mathcal M}}
\def\CH{{\mathcal H}}     \def\CJ{{\mathcal J}}   \def\CO{{\mathcal O}}
\def\Ga{\pmb{\alpha}}
\def\rvol{\mathrm{Vol}_\mathrm{rel}}
\def\c{c}  \def\c{r}    
\def\t{u}    
\def\gog{{\mathfrak g}}
\def\diag{{\rm diag \,}}
\def\ii{\mathrm{i\,}}
\def\N{\mathbb{N}}
\def\Z{\mathbb{Z}}
\def\R{\mathbb{R}}
\def\T{\mathbb{T}}
\newtheorem{proposition}{Proposition}
\def\hDelta{\hat\Delta}
\begin{document}

\begin{titlepage}  
\begin{center}
\title{On  Horn's Problem and its Volume Function}
\medskip
\author{Robert Coquereaux} 
\address{Aix Marseille Univ, Universit\'e de Toulon, CNRS, CPT, Marseille, France\\
Centre de Physique Th\'eorique}
\bigskip\medskip
\medskip
\author{Colin McSwiggen${}^{\dagger\ddagger}$ and Jean-Bernard Zuber${}^\ddagger$\footnote{zuber@lpthe.jussieu.fr   \qquad ORCID 0000-0001-6907-6538}}
\address{
$\dagger$ Brown University, Division of Applied Mathematics, Providence, USA \\
 \ddag \ Sorbonne Universit\'e,  UMR 7589, LPTHE, F-75005,  Paris, France\\ \& CNRS, UMR 7589, LPTHE, F-75005, Paris, France
 }

\bigskip\bigskip

\bigskip\bigskip
\begin{abstract}
We consider an extended version of Horn's problem: 
given two orbits $ {\mathcal O}_\alpha$ and $ {\mathcal O}_\beta$ of a  linear representation of a compact Lie group, let $A\in  {\mathcal O}_\alpha$, $B\in  {\mathcal O}_\beta$ be independent and invariantly distributed random elements of the two orbits.  The problem is to describe the probability distribution of the orbit of the sum $A+B$.
We study in particular the familiar case of coadjoint orbits, and also the orbits of self-adjoint 
real, complex and quaternionic matrices under the conjugation actions of $\SO(n)$, $\SU(n)$ and $\mathrm{USp}(n)$ respectively.
The probability density can be expressed in terms of a function that we call the volume function. In  this paper, (i) we relate this function to the symplectic or Riemannian geometry of the orbits, depending on the case; (ii) 
we discuss  its non-analyticities and possible vanishing; (iii) in the coadjoint case, we study its relation to tensor product multiplicities (generalized Littlewood--Richardson 
coefficients) and show that it computes the volume of a family of convex polytopes introduced by Berenstein and Zelevinsky. These considerations are illustrated by a detailed study of the volume function for the
coadjoint orbits of $B_2=\frak{so}(5)$.
\end{abstract}
\end{center}

 \end{titlepage}
 
  
  \section{Introduction}
 Horn's problem is the following question.  Given $n$-by-$n$ Hermitian matrices $A$ and $B$ with known eigenvalues $\alpha_1 \ge \hdots \ge \alpha_n$ and $\beta_1 \ge \hdots \ge \beta_n$, what can be said about the eigenvalues $\gamma_1 \ge \hdots \ge \gamma_n$ of their sum $C = A+B$?  After decades of work by many mathematicians, the answer to this question is now well known
\cite{Ho, Kly, KT00} 

There is an extension of Horn's problem that is both more general and more quantitative.  Let $V$ be a representation of a compact Lie group $G$.  To each $G$-orbit $\CO \subset V$ we associate the {\it orbital measure at $\CO$}, which is the unique $G$-invariant probability measure on $V$ that is concentrated on $\CO$.  The {\it orbit space} is the topological quotient $V/G$, in which each point corresponds to a $G$-orbit.  If $\CO_\alpha$ and $\CO_\beta$ are two such orbits and we choose $A \in \CO_\alpha$ and $B \in \CO_\beta$ independently at random from their respective orbital measures, the sum $C=A+B$ will lie in a random orbit $\CO_\gamma$.  For each pair $(\CO_\alpha, \CO_\beta)$, we thus obtain a probability measure on the orbit space, called the {\it Horn probability measure}.  Concretely, $C \in V$ is distributed according to the convolution of the orbital measures at $\CO_\alpha$ and $\CO_\beta$, and the Horn probability measure is the pushforward of this convolution by the quotient map $V \to V/G$.  The extended problem is then to give an explicit description of  the Horn probability measure, whereas the original Horn's problem is to describe only the support of this measure in the specific case where $V$ is the coadjoint representation of $\U(n)$.

In this paper, we study the extended Horn's problem in two families of cases that are of special interest:
\begin{enumerate}
\item \textbf{Coadjoint representations:} $G$ is an arbitrary compact, connected, semisimple Lie group acting by the coadjoint representation on the dual of its Lie algebra $\gog$.
\item \textbf{Spaces of self-adjoint matrices:} $G$ is one of the classical groups $\SO(n),\ \SU(n)$ or $\mathrm{USp}(n)$ acting by conjugation on, respectively, real symmetric, complex Hermitian or quaternionic self-dual matrices. Following convention, in these cases we study the distribution of the {\it sorted eigenvalues} of $A+B$ rather than its orbit.\footnote{ In most cases this distinction is immaterial because the spectrum of $A+B$ uniquely determines its orbit.  The only exceptions are the even special orthogonal groups $G = SO(2n)$, in which case $\diag(x_1, \hdots, x_n)$ and $\diag(x_{w(1)}, \hdots, x_{w(n)})$ may lie in different orbits when $w \in S_n$ is an odd permutation.}
\end{enumerate}
We will refer to these respectively as the ``coadjoint case'' and the ``self-adjoint case.''

The main object of our study will be a function $\CJ$, called the {\it volume function},  which can be computed from the density of the Horn probability measure (and vice versa). This function encodes various kinds of geometric information about the orbits, and in the coadjoint case it additionally encodes combinatorial information related to tensor product multiplicities of irreducible representations of $\gog$.  We discuss the singular and vanishing loci of $\CJ$, its relationship to the Riemannian geometry of the orbits as submanifolds of $V$, and (in the coadjoint case) its interpretation as both a symplectic volume and the volume of a convex polytope, as well as identities that relate $\CJ$ to the tensor product multiplicities of $\gog$.  Finally, we carry out a detailed case study of the coadjoint case for $\gog = \mathfrak{so}(5)$.

 This paper discusses several constructions related to the extended Horn's problem and to tensor product multiplicities, including a number of previously known results that we recall for the sake of completeness.  However, to the authors' knowledge, Propositions \ref{prop1} through \ref{prop:J-positive-interior}, equation (\ref{eqn:J-fiber-int}), and the conjectured expression (\ref{last-formula}) either are new or extend in various ways several results previously obtained by two of us in \cite{Z1, CZ1, CZ2}.  Proposition \ref{prop:J-equals-BZvol} is a particular instance of a more general phenomenon whereby the volumes of certain symplectic orbifolds equal the volumes of polytopes whose integer points count representation multiplicities \cite{Heck, GLS}.  As far as we have been able to determine however, our proof is novel and the precise statement has not appeared previously.
 
In the following two subsections we define $\CJ$ for the coadjoint and self-adjoint cases.  To avoid overloading notation we give separate definitions in the two cases, but the concepts are analogous.

\subsection{$\CJ$ in the coadjoint case}

We first develop some preliminaries related to Horn's problem.  Let $\langle \cdot, \cdot \rangle$ be the $G$-invariant inner product given by $-1$ times the Killing form.  We identify $\gog \cong \gog^*$ using the inner product and we then identify the orbit space with the dominant Weyl chamber $\CC_+$ of a Cartan subalgebra $\mathfrak{t} \subset \gog$, so that the quotient map $\mathfrak{g}^* \to \mathfrak{g}^*/G$ sends each orbit to its unique representative in $\CC_+$.  We identify functions on $\CC_+$ with their unique $G$-invariant extension to $\gog$.  The Jacobian of the quotient map $\gog^* \to \gog^*/G$ is equal to $\kappa_\gog \Delta_\gog(x)^2$, where $\Delta_\gog(x):= \prod_{\Ga >0} \langle \Ga,x\rangle$ is the product of the positive roots of $\gog$ and $\kappa_\gog$ is a numerical coefficient. 
  For the classical Lie algebras, $\kappa_\gog$
  may be determined by computing in two different ways a Gaussian integral over $\gog$ and making 
 use of the Macdonald--Opdam integral \cite{M-O}, giving 
 \be\label{kappag} \kappa_\gog = \frac{(2\pi)^{N_r}}{\Delta_\gog(\rho)}=  \frac{(2\pi)^{N_r}}{\prod_{i=1}^r \ell_i !} \times K\ee
 in terms of the Weyl vector $\rho = \frac{1}{2} \sum_{\Ga > 0} \Ga$, the rank $r$, the number $N_r$ of positive roots, and the Coxeter exponents $\ell_i$ of $\gog$. The coefficient $K=\prod_{\alpha>0} \frac{\langle \pmb{\theta},\pmb{\theta}\rangle}{\langle\Ga,\Ga\rangle}$,
 where $\pmb{\theta}$ is any long root, 
 equals 1 for simply laced algebras, while for the non-simply laced cases it takes the values  
 $K= 2^r$ for $ B_r \ (r> 1)$, 
$K = 2^{r(r-1)}$ for $C_r \ (r> 1)$,  $K = 2^{12}$ for $F_4$ and $K = 3^3$ for $G_2$ \cite{Coq}.

 An element $x \in \mathfrak{t}$ is said to be {\it regular} if $\Delta_\gog(x) \neq 0$. This term should not be confused with the more general notion of a {\it regular value} of a differentiable map, which we will also use frequently. For $x$ regular, the Jacobian of the quotient map is equal to the Riemannian volume of the orbit $\CO_x$ with respect to the metric induced by the inner product.   (Note that this Riemannian volume differs from the symplectic volume discussed below.)

The Horn probability measure is supported on a convex polytope $\mathscr{H}_{\alpha \beta} \subset \CC_+$, called the {\it Horn polytope}, and is absolutely continuous with respect to the induced Lebesgue measure on $\mathscr{H}_{\alpha \beta}$.   We assume in what follows that $\alpha$ and $\beta$ are regular, in which case $\dim \mathscr{H}_{\alpha \beta} = r$ and the measure has a global density on $\CC_+$.

  Our discussion of the Horn probability measure will make ubiquitous use of the {\it orbital integral} (also called the Harish-Chandra orbital function), defined for $\alpha, x \in \CC_+$ as
  \be\label{orb-int}  \CH(\alpha, \ii x)  =\int_G dg\ e^{\ii \langle g \cdot \alpha, x \rangle},
  \ee
where $dg$ is normalized Haar measure. Considered as a $G$-invariant function {of $x \in \gog$, $\CH(\alpha, \ii x)$ } is the Fourier transform of the orbital measure at $\CO_\alpha$, {so that the characteristic function of the convolution of orbital measures at $\CO_\alpha$ and $\CO_\beta$ is the product $\CH(\alpha, \ii x) \CH(\beta, \ii x)$.}

The probability density function (PDF) of the Horn probability measure can be written in terms of orbital integrals {by taking the inverse Fourier transform of this characteristic function, rewriting it as a function of $\gamma \in \CC_+$, and multiplying by the Jacobian $\kappa_\gog \Delta_\gog(\gamma)^2$ to account for the quotient map.  The resulting expression for the PDF is}
  \bea \nonumber
  p(\gamma | \alpha, \beta)&=&  
 \frac{\kappa_\gog^2\ \Delta_\gog(\gamma)^2}{(2\pi)^{\dim \gog} |W|}   \int_\mathfrak{t} dx\ \Delta_\gog(x)^2 \CH(\alpha, \ii x) \CH(\beta, \ii x) \big(\CH(\gamma, \ii x)\big)^* \\
&=&
\label{PDF-coadjoint}  
 \frac{1}{(2\pi)^{r} |W| } \left( \frac{\Delta_\gog(\gamma)}{\Delta_\gog(\rho)} \right)^2  \int_\mathfrak{t} dx\ \Delta_\gog(x)^2 \CH(\alpha, \ii x) \CH(\beta, \ii x) \big(\CH(\gamma, \ii x)\big)^*
  \eea
 where   $dx$ is the Lebesgue measure on $\mathfrak{t}$ associated to the inner product $\langle \cdot, \cdot \rangle$
and $|W|$ is the order of the Weyl group. Similar formulae for the convolution of orbital measures have appeared in \cite{DOW}. 

 The volume function $\CJ(\alpha, \beta; \gamma)$ is then defined as
\bea \nonumber
\CJ(\alpha, \beta; \gamma) &:=& 
{\frac{\Delta_\gog(\alpha) \Delta_\gog(\beta)}{\Delta_\gog(\gamma) \Delta_\gog(\rho)}}\ p(\gamma | \alpha, \beta) \\
\label{eqn:J-def-coadjoint}
&=& \frac{\Delta_\gog(\alpha) \Delta_\gog(\beta) \Delta_\gog(\gamma)}{(2 \pi)^r\, |W|\, \Delta_\gog(\rho)^3} \int_\mathfrak{t} dx\ \Delta_\gog(x)^2 \CH(\alpha, \ii x) \CH(\beta, \ii x) \big(\CH(\gamma, \ii x)\big)^*.
\eea
For fixed $\alpha$ and $\beta$ it is a piecewise polynomial function of $\gamma$.  The last line of (\ref{eqn:J-def-coadjoint}) can be used to define $\CJ$ without assuming that $\alpha$ or $\beta$ is regular, though one finds that $\CJ$ vanishes for non-regular arguments.  Note that although we take $\gamma \in \CC_+$ above, the expression (\ref{eqn:J-def-coadjoint}) extends naturally to a function on $\mathfrak{t}$ that is skew-invariant under the action of $W$.

The volume function is the central concern of this paper, and it admits several interpretations.  To begin with, $\CJ$ computes the symplectic (Liouville) volume of a family of symplectic orbifolds parametrized by the triple $(\alpha, \beta, \gamma)$.  Here we take $\alpha, \beta$ regular and $\gamma$ on the interior of a polynomial domain of $\CJ$. Coadjoint orbits admit a canonical $G$-invariant symplectic form, the Kostant--Kirillov--Souriau form \cite{Kir}, for which the inclusion into $\gog^*$ is a moment map.  For $x \in \mathfrak{t}^*$ regular, the Liouville volume of the orbit $\CO_x$ is then equal to $\Delta_\gog(x) / \Delta_\gog(\rho)$.  (See e.g. \cite{McS} sects. 4.2 and 4.3 for a derivation of this well-known fact.)  The product of orbits $\CO_\alpha \times \CO_\beta \times \CO_{-\gamma}$ carries a diagonal $G$-action with moment map $\phi: (A, B, -C) \mapsto A + B - C$.  Let $\mu$ be the Liouville volume measure on $\CO_\alpha \times \CO_\beta \times \CO_{-\gamma}$.  The pushforward $\phi_* \mu$ is the Borel measure on $\gog^*$ defined by $\phi_*\mu (U) = \mu(\phi^{-1}(U))$, $U \subset \gog^*$.  This measure is equal to the convolution of the three orbital measures times the volumes of the orbits, so that it has a density (Radon-Nikodym derivative) given by
 {\bea \nonumber
 h_{\alpha \beta}^\gamma(z) \!\!\! &=& \!\!\! \frac{1}{(2\pi)^{\dim \gog}} \frac{\Delta_\gog(\alpha) \Delta_\gog(\beta) \Delta_\gog(\gamma)}{\Delta_\gog(\rho)^3} \int_\mathfrak{\gog} dx\ \CH(\alpha, \ii x) \CH(\beta, \ii x) \CH(-\gamma, \ii x)\, e^{-\ii \langle x, z \rangle} \\
 \label{eqn:DH-density}
 &=& \!\!\! \frac{\kappa_\gog}{(2\pi)^{\dim \gog} |W|} \frac{\Delta_\gog(\alpha) \Delta_\gog(\beta) \Delta_\gog(\gamma)}{\Delta_\gog(\rho)^3} \int_\mathfrak{t} dx\ \Delta_\gog(x)^2 \CH(\alpha, \ii x) \CH(\beta, \ii x) \big[\CH(\gamma, \ii x)\CH(z, \ii x)\big]^*
 \eea 
 with respect to Lebesgue measure $dz$ on $\gog^*$.}  Thus we find that
 \be \CJ(\alpha, \beta; \gamma) = (2\pi)^{N_r} \Delta_\gog(\rho) \, h_{\alpha \beta}^\gamma(0). \ee
By the theory of Duistermaat-Heckman measures \cite{DH}, $h_{\alpha \beta}^\gamma(0)$ equals the symplectic volume of $\phi^{-1}(0) / G$, the Hamiltonian reduction of $\CO_\alpha \times \CO_\beta \times \CO_{-\gamma}$ at level 0, which is a symplectic orbifold. (In some cases the quotient $\phi^{-1}(0) / G$ is smooth, so that it is a genuine symplectic manifold, but in general it may have singular points.) For a more detailed discussion of these symplectic quotients in the context of the classical Horn's problem, we refer the reader to \cite{Knut}.

The volume function is also related to tensor product multiplicities (generalized Littlewood-Richardson coefficients) in representation theory.  Let $V_\lambda, V_\mu, V_\nu$ be irreducible representations of $\mathfrak{g}$ with highest weights $\lambda, \mu, \nu$, which we assume to be a {\it compatible triple}, meaning that $\lambda + \mu - \nu$ belongs to the root lattice.  Then, in the language of geometric quantization, $\CJ(\lambda, \mu; \nu)$ is a ``semiclassical approximation'' for the tensor product multiplicity $C_{\lambda \mu}^\nu := \dim \mathrm{Hom}_\gog ( V_\lambda \otimes V_\mu \to V_\nu). $
Indeed, a connection with representation theory is already apparent in (\ref{eqn:J-def-coadjoint}).  Let a prime denote the Weyl shift of a weight: $\lambda' = \lambda + \rho$.  By the Weyl dimension formula, $\dim V_\lambda = \Delta_\gog(\lambda')/\Delta_\gog(\rho),$ so that we have
\be \label{eqn:J-Weyl-dimension}
\CJ(\lambda', \mu'; \nu') = \frac{\dim V_\lambda \dim V_\mu \dim V_\nu}{(2 \pi)^r |W|} \int_\mathfrak{t} dx\ \Delta_\gog(x)^2 \CH(\lambda', \ii x) \CH(\mu', \ii x) \big(\CH(\nu', \ii x)\big)^*.
\ee
We explore the representation-theoretic significance of $\CJ$ more thoroughly below in sect. \ref{sectJ-LR}, where we derive two explicit relations expressing $\CJ$ in terms of the multiplicities $C_{\lambda \mu}^\nu$.  In sect. \ref{sectVol} we provide yet another perspective on the relationship between the volume function and tensor product multiplicities, by showing that that $\CJ(\lambda, \mu; \nu)$ is proportional to the Euclidean volume of a polytope whose number of integer points is equal to $C_{\lambda \mu}^\nu$.


\subsection{$\CJ$ in the self-adjoint case}
\label{sec:intro_selfadjoint}

The volume function in the self-adjoint case does not offer as rich an array of geometric interpretations as in the coadjoint case, largely due to the fact that, in general, the orbits do not carry a natural symplectic structure.  The exception is when $G = \SU(n)$, in which case the coadjoint representation is equivalent to the action by conjugation on traceless Hermitian matrices, so that for $\alpha, \beta$ traceless the coadjoint and self-adjoint cases coincide.  However, for real symmetric or quaternionic self-dual matrices we are not dealing with a coadjoint representation, and the interpretations of $\CJ$ as the volume of a symplectic manifold or a polytope do not apply.  Nevertheless, $\CJ$ still encodes substantial geometric information.

 We first observe that a translation $A \mapsto A + aI,$ $B \mapsto B + bI$ ($a, b \in \R$) merely translates the distribution of each $\gamma_i$ by $a + b$.  Therefore, up to a translation of its support, the Horn probability measure depends only on the trace-free parts of $\alpha$ and $\beta$. Accordingly, in what follows, we assume without loss of generality that $A$ and $B$ are traceless.

Let $G = \SO(n),\ \SU(n),$ or $\mathrm{USp}(n)$.  We label these three cases by a parameter\footnotemark \footnotetext{In the language of $\beta$-ensembles appearing in random matrix theory, our $\theta$ is equal to $\beta/2$.  We opted for this notation, which is more common in symmetric function theory, to avoid overloading the symbol $\beta$.} $\theta$ that respectively equals $1/2$, $1$ or $2$.  Let $\CM_{\theta,n}$ be respectively the space of $n$-by-$n$ real symmetric, complex Hermitian, or quaternionic  self-dual matrices.  Then $G$ acts on $\CM_{\theta,n}$ by conjugation.  Let $\CM_{\theta,n}^0$ be the subspace of traceless matrices in $\CM_{\theta,n}$.
  We define a $G$-invariant inner product $\langle \cdot, \cdot \rangle$ on $\CM_{\theta,n}$ using the trace form $\langle A,B \rangle = \tr(AB)$.  The space of spectra of matrices in $\CM_{\theta,n}^0$ is naturally identified with the space of real diagonal matrices $\diag(x_1, \hdots, x_n)$ such that $x_1 \ge \hdots \ge x_n$ and $\sum x_i=0$, which we also denote $\CC_+$.  We identify the space of all real traceless diagonal matrices with $\R^{n-1}$, and  we identify functions on $\CC_+$ with their symmetric (in the $x_i$'s) extensions to $\R^{n-1}$.

 The Jacobian of the diagonalization map is equal to $\kappa_\theta |\Delta(x)|^{2\theta}$, where $\Delta(x) := \prod_{i < j} (x_i - x_j)$ is the Vandermonde determinant, and the constant $\kappa_\theta$ may again be determined by computing in two different ways a Gaussian integral over $\CM_{\theta,n}$ and
 making use of the Mehta--Dyson integral \cite{M-D}:
\be \int_{\R^n} dx\, |\Delta(x)|^{2\theta} e^{-\oh \sum_i x_i^2} = (2\pi)^{n/2} \prod_{j=1}^n \frac{\Gamma(1+j \theta)}{\Gamma(1+\theta)}\,,\ee
whence
\be\label{kappatheta} \kappa_\theta= \frac{(2\pi)^{\oh n(n-1)\theta}  \, n!}{\prod_{j=1}^n \frac{\Gamma(1+j \theta)}{\Gamma(1+\theta)}\,}\, .\ee

   If $\Delta(x) \neq 0$ we say that $x$ is {\it regular}; this means that $x$ is a diagonal (traceless) matrix with distinct eigenvalues.  For $x$ regular and $G \neq \SO(n)$ for $n$ even, the Jacobian of the diagonalization map is equal to the Riemannian volume of the orbit $\CO_x$ with respect to the metric induced by the inner product. (When $G = \SO(n)$, $n$ even, for regular elements there are two distinct orbits with the same spectrum, so that the Jacobian is equal to twice this volume.)
 
 As before, the Horn probability measure is supported on a convex polytope $\mathscr{H}_{\alpha \beta}$ in $\CC_+$, also called the {\it Horn polytope}, and is absolutely continuous with respect to the induced Lebesgue measure on $\mathscr{H}_{\alpha \beta}$. 
 We assume for the remainder of the paper that $\alpha$ and $\beta$ are regular, in which case $\dim \mathscr{H}_{\alpha \beta} = n-1$, so that the Horn probability measure has a density with respect to Lebesgue measure on $\CC_+$.

 For $\alpha, x \in \CC_+$, the orbital integral is again defined by
  \be\label{orb-int-selfadjoint}  \CH(\alpha, \ii x)  =\int_G dg\ e^{\ii \langle g \cdot \alpha, x \rangle}.
  \ee
{Following an analogous procedure to the derivation of (\ref{PDF-coadjoint}), we take the inverse Fourier transform of the characteristic function of the convolution of orbital measures, then multiply by the Jacobian of the diagonalization map to obtain} the PDF of the Horn probability measure:
  \be\label{PDF-selfadjoint}  
  p(\gamma | \alpha, \beta)=  
 \frac{\kappa_\theta^2\ \Delta(\gamma)^{2\theta}}{(2\pi)^{\dim \CM_{\theta,n}^0}n!}   \int_{\R^{n-1}} dx\ |\Delta(x)|^{2\theta} \CH(\alpha, \ii x) \CH(\beta, \ii x) \big(\CH(\gamma, \ii x)\big)^*, \qquad \gamma \in \CC_+,
  \ee
where $dx$ is the Lebesgue measure induced by identifying $\R^{n-1}$ with the hyperplane $\sum x_i = 0$ in $\R^n$.

 We expect to recover (\ref{PDF-coadjoint})  from (\ref{PDF-selfadjoint}) when $\theta = 1$. 
  In fact for $\SU(n)$ we have $\kappa_{\mathfrak{su}(n)} = \kappa_1$, $\Delta_{\mathfrak{su}(n)} = \Delta$, 
  $\dim  \CM_{\theta,n}^0=n^2-1 =\dim \mathfrak{su}(n)$, and $|W| = n!$, so indeed this is the case.

 By analogy with (\ref{eqn:J-def-coadjoint}), we define the volume function $\CJ(\alpha, \beta; \gamma)$ by
\bea\label{eqn:J-def-selfadjoint-a}
\CJ(\alpha, \beta; \gamma) &:=& \frac{(2\pi)^{\theta n (n-1)}}{\kappa_\theta^{2}\ \Delta_\gog(\rho)^3} \left( \frac{\Delta(\alpha) \Delta(\beta)}{\Delta(\gamma)} \right)^\theta \  p(\gamma | \alpha, \beta) \\
\label{eqn:J-def-selfadjoint-b}& =& \frac{\big ( \Delta(\alpha) \Delta(\beta) \Delta(\gamma) \big )^\theta}{(2 \pi)^{n-1} n!\, \Delta_\gog(\rho)^3} \int_{\R^{n-1}} dx\ |\Delta(x)|^{2\theta} \CH(\alpha, \ii x) \CH(\beta, \ii x) \big(\CH(\gamma, \ii x)\big)^*\,.
\eea
 Here $\Delta_\gog(\rho)$ indicates the same quantity as in the coadjoint case for the group $G$, so that this expression for $\CJ$  recovers (\ref{eqn:J-def-coadjoint}) for $\theta = 1$. 
 It is clear that $\CJ$ depends in both cases on the choice of $G$, but for the sake of brevity we choose {\it not} to append this information to the notation $\CJ$ and will instead specify the case and group under discussion whenever necessary.

Note that $\CJ$ is defined in (\ref{eqn:J-def-selfadjoint-a}) only for traceless $\alpha,\beta, \gamma$, but since 
in  (\ref{eqn:J-def-selfadjoint-b}) $\R^{n-1}$ is understood as the space of traceless $x$'s, 
$\ \CH(\alpha, \ii x)$ is invariant under translation of all $\alpha_i$ by the same constant $a$: 
\be\label{identite-bete} \CH(\alpha +aI, \ii x)  =  e^{\ii a  \sum x_i}\CH(\alpha, \ii x) =  \CH(\alpha, \ii x)\,,\ee
so that one may extend $\CJ$ to 
arbitrary $\alpha,\beta,\gamma$, even relaxing the conservation of traces.

\medskip

\subsection{Organization of the paper}
\begin{itemize}
\item In sect. \ref{self-adjoint-cases} we consider the self-adjoint case.  We first show in sect. \ref{sing-loci} that $\CJ$ is real-analytic away from a particular collection of hyperplanes, and that the equations defining these hyperplanes are the same in all three cases of real symmetric, complex Hermitian, and quaternionic self-dual matrices.  Next, in sect. \ref{nature-sing-SOn} we relate $\CJ$ to the Riemannian geometry of the orbits, considered as submanifolds of $V$, and we explain how this interpretation helps to understand the nature of the divergences that arise in $\CJ$ in the case of real symmetric matrices \cite{CZ2}. 
\item For the remainder of the paper after sect. \ref{self-adjoint-cases}, we restrict our attention to the coadjoint case.  In sect. \ref{sectJ-LR} we discuss the relationship between $\CJ$ and tensor product multiplicities, and we derive two different identities that express $\CJ$ in terms of tensor product multiplicities when the arguments are particular triples of highest weights.
\item In sect. \ref{sec:BZpol}, we relate $\CJ$ to the Euclidean volume of the {\it BZ polytope}, which provides a polyhedral model for tensor product multiplicities.  This point of view explains geometrically the relationship between $\CJ$ and tensor product multiplicities, and also provides insight into the nature of the non-analyticities of $\CJ$. It also leads to a proof that $\CJ$ does not vanish in the interior of the  Horn polytope.
\item Sect. \ref{sectB2} is a detailed case study of the above considerations for $B_2$, i.e. the case $\gog = \mathfrak{so}(5)$.
\end{itemize}


\section{The self-adjoint case}
\label{self-adjoint-cases}
In this section we take $G = \SO(n)$, $\SU(n)$ or $\mathrm{USp}(n)$, and we respectively fix $\theta = 1/2$, 1, or 2 and let $\CM_{\theta,n}^0$ be the set of traceless $n$-by-$n$ real symmetric, complex Hermitian or quaternionic self-dual matrices.  We study the action of $G$ on $\CM_{\theta,n}^0$ by conjugation.

In sect. \ref{sing-loci} we present an argument showing that non-analyticities of $\CJ$ lie along the same hyperplanes in all three cases.  In sect. \ref{nature-sing-SOn} we write $\CJ$ in terms of quantities related to the Riemannian geometry of the orbits, which can help to understand the origin of the divergences that appear in both $\CJ$ and the Horn PDF in the {real symmetric case.}


\subsection{Singular loci and nature of the non-analyticities}
   \label{sing-loci}  
     
   What follows is essentially an argument due to Mich\`ele Vergne \cite{MV}, based on a technique originally used to identify the singular loci of Duistermaat-Heckman densities in symplectic geometry.  It is well known \cite{DH} that for a Hamiltonian $G$-action on a symplectic manifold, the associated Duistermaat-Heckman measure on $\R^r$, $r = \mathrm{rank}(G)$, has a piecewise polynomial density with non-analyticities along certain hyperplanes.  In the coadjoint case the Horn probability measure is equal to a polynomial times the Duistermaat-Heckman measure for the diagonal $G$-action on $\CO_\alpha \times \CO_\beta$, so these symplectic methods can be used to identify the singular locus of the density.  In this section we show that an analogous technique works even in cases where the orbits do not carry a symplectic structure.
   
For $A, B \in \CM_{\theta,n}^0$ we can write $A=g_1 \alpha g_1^{-1}$, $B=g_2 \beta g_2^{-1}$ for some $g_1,g_2\in G$, where $\alpha$ and $\beta$ are diagonalizations of $A$ and $B$.
Given $\alpha$ and $\beta$  ordered and regular, and $g_1,g_2$ drawn independently at random from the Haar probability measure on $G$, we are interested in the distribution of  $\gamma = \diag(\gamma_i)$, where $\gamma_1 \ge \hdots \ge \gamma_n$ are the eigenvalues of $C=A+B$.

\begin{proposition}{\label{prop1} The distribution of $\gamma$ has a piecewise real-analytic density. 
Non-analyticities occur only when $\gamma$ lies on a hyperplane defined by an equation of the form
\be\label{sing-hyperpl}
\sum_{j\in K} \gamma_j= \sum_{j\in I} \alpha_j+ \sum_{j\in J} \beta_j\,,\ee
 where the subsets $I,\, J,\,K$ have the same cardinality: $|I|=|J|=| K|$. }
 \end{proposition}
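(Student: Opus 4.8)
The plan is to realize the law of $\gamma$ as the pushforward of a real-analytic measure under an explicit real-analytic map, and then to locate its possible non-analyticities at the critical values of that map, which I identify by a short linear-algebra computation. Writing $A=g_1\alpha g_1^{-1}$, $B=g_2\beta g_2^{-1}$, $C=A+B$, consider the polynomial map $\Phi\colon G\times G\to\CM_{\theta,n}^0$, $(g_1,g_2)\mapsto A+B$, composed with the ordered-spectrum map $\pi\colon\CM_{\theta,n}^0\to\CC_+$, to form $\Psi:=\pi\circ\Phi$; the law of $\gamma$ is the pushforward under $\Psi$ of the Haar probability measure on $G\times G$. Since $G$ is compact, $\Psi$ is proper, and over any regular point of $\CC_+$ the matrix $C$ has distinct eigenvalues, so $\pi$ — hence $\Psi$ — is real-analytic there. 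The first step is then to invoke the standard fact (the same mechanism that makes Duistermaat-Heckman densities piecewise polynomial) that the pushforward of a real-analytic density under a proper real-analytic map has a real-analytic density near every regular value: by the coarea formula the density at $\gamma$ is a fibre integral of a real-analytic integrand over a fibre that, by local triviality of $\Psi$ near a regular value, varies real-analytically with $\gamma$. Consequently, inside the open chamber the density can fail to be real-analytic only over critical values of $\Psi$.

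The second step computes the critical set. Differentiating $\Phi$ along left-invariant directions gives $\operatorname{im}\,d\Phi_{(g_1,g_2)}=[\gog,A]+[\gog,B]=T_A\CO_\alpha+T_B\CO_\beta$, where $T_X\CO_X=[\gog,X]$ is the tangent space to the $G$-orbit of $X$. For regular $C$ the map $\pi$ is a submersion with $\ker d\pi_C=T_C\CO_C$ (a dimension count matches the orbit dimension against $\dim\CM_{\theta,n}^0-(n-1)$ in each of the three cases), so $d\Psi_{(g_1,g_2)}$ is onto exactly when
\[
T_A\CO_\alpha+T_B\CO_\beta+T_C\CO_C=\CM_{\theta,n}^0 .
\]
Using the trace form together with the symmetric-pair bracket relations $[\CM_{\theta,n},\CM_{\theta,n}]\subseteq\gog$ and $T_X\CO_X=[\gog,X]$, the orthogonal complement of the left-hand side inside $\CM_{\theta,n}^0$ is the space of traceless matrices in $\CM_{\theta,n}$ commuting with $A$, $B$ and $C$ simultaneously. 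Hence a regular $\gamma\in\CC_+$ is a critical value of $\Psi$ if and only if there exist $A\in\CO_\alpha$, $B\in\CO_\beta$ with $\operatorname{spec}(A+B)=\gamma$ that admit a common centralizer $M\in\CM_{\theta,n}^0\setminus\{0\}$.

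The third step turns such an $M$ into an equation of the form (\ref{sing-hyperpl}). Conjugating $A,B,C$ by a suitable element of $G$ we may take $M$ diagonal; being traceless and nonzero it has at least two distinct eigenvalues, so its eigenspaces partition $\{1,\dots,n\}$ into at least two blocks, and $A$, $B$, $C$ are simultaneously block-diagonal with respect to this partition. Picking $S$ to be a proper nonempty union of blocks, the restrictions to the coordinate subspace indexed by $S$ satisfy $A|_S+B|_S=C|_S$; since $\alpha,\beta,\gamma$ are regular, the spectra of $A|_S$, $B|_S$, $C|_S$ are genuine subsets $\{\alpha_i\}_{i\in I}$, $\{\beta_j\}_{j\in J}$, $\{\gamma_k\}_{k\in K}$ of cardinality $|S|$, and taking traces yields $\sum_{k\in K}\gamma_k=\sum_{i\in I}\alpha_i+\sum_{j\in J}\beta_j$ with $|I|=|J|=|K|=|S|$. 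Together with the first two steps this shows the density is real-analytic away from the finitely many hyperplanes of this type; the complement of these hyperplanes in the open chamber being a finite union of convex open regions, on each of which the density is locally real-analytic, it is piecewise real-analytic with non-analyticities only along such hyperplanes.

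I expect the main obstacle to be the first step: rigorously establishing real-analyticity of the pushforward density near regular values of $\Psi$ in this non-symplectic setting, where one cannot simply quote the Duistermaat-Heckman theorem. The cleanest route is a careful application of the coarea formula together with the real-analytic implicit function theorem (to parametrize the fibres analytically near a regular value), but one must be attentive to two points: $\pi$ is only real-analytic on the regular locus of $\CM_{\theta,n}^0$ (harmless here, since a regular $\gamma$ forces the relevant matrices to be regular), and the base measure is Haar measure on a compact Lie group rather than Lebesgue measure (which only changes the analytic weight appearing in the fibre integral). An alternative is to run a stationary-phase analysis directly on the Fourier-integral representation (\ref{PDF-selfadjoint}) of the density; this is more computational but also makes the symplectic-geometry origin of the technique transparent.
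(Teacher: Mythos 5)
Your proof is correct and follows essentially the same strategy as the paper: identify the law of $\gamma$ as the pushforward of Haar measure under a proper real-analytic map, argue analyticity of the density near regular values, and convert non-surjectivity of the differential into a commuting matrix $Z$ whose eigenspace decomposition yields block-diagonality and hence the trace relation (\ref{sing-hyperpl}). The ``main obstacle'' you flag in Step 1 is handled in the paper by citing Shiga's local-triviality theorem for proper real-analytic maps (\cite{Shiga}, p.~133), which gives exactly the analytic product-neighborhood structure you want without needing to develop the coarea/implicit-function argument from scratch; your composition with the ordered-spectrum map $\pi$ (giving $\Psi=\pi\circ\Phi$ rather than the paper's $\Phi$) is immaterial since $[\gog,C]\subseteq[\gog,A]+[\gog,B]$, so the critical sets coincide over the regular locus.
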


\noindent \begin{proof} Following M. Vergne, we 
consider the map $\Phi : \ G\times G \to \CM_{\theta,n}^0$ that sends $(g_1,g_2) \mapsto C=A+B $. 
The pushforward by $\Phi$ of the product of the Haar measures is a
 $G$-invariant measure on the image, which has a density $\rho_{A,B}(C)$.  {The map $\Phi$ is proper and real-analytic, so that for $C_0$ any regular value of $\Phi$, a result of Shiga (see \cite{Shiga}, p.~133) guarantees the existence of a neighborhood $E \ni C_0$ such that $\Phi^{-1}(E) \cong \Phi^{-1}(C_0) \times E$ as real-analytic manifolds.  Let $(z, C)$ be local analytic coordinates on $\Phi^{-1}(C_0) \times E$.  In these coordinates the Haar measure has a real-analytic density $h(z, C)$, and for $C \in E$, $\rho_{A,B}$ can be written as $$\rho_{A,B}(C) = \int_{\Phi^{-1}(C_0)} h(z, C)\ dz,$$ which is clearly a real-analytic function on $E$.} It follows that $\rho_{A,B}$ is analytic in a neighborhood of every regular value of $\Phi$.  {As before, let $\CC_+ \subset \CM_{\theta,n}^0$ denote the cone of} traceless diagonal matrices $\mathrm{diag}(\gamma_1, \hdots, \gamma_n)$ with $\gamma_1 \ge \hdots \ge \gamma_n$. The restriction of $\rho_{A,B}$ to $\CC_+$ equals the volume function $\mathcal{J}$, up to a normalizing factor depending on $\alpha$ and $\beta$.  All non-analyticities of $\mathcal{J}$ must therefore occur at non-regular values of $\Phi$, i.e. $\gamma$ such that the differential $d\Phi$ fails to be surjective at some point in the preimage $\Phi^{-1}(\gamma)$.  The claim will now follow by identifying all non-regular values of $\Phi$.
 
At the point $(g_1, g_2)$, the differential is $$d_{(g_1, g_2)}\Phi (X, Y) = [X,A] + [Y,B], \quad X, Y \in \mathfrak{g}$$ where we have identified the tangent space $T_{(g_1, g_2)}(G \times G)$ with $\mathfrak{g} \oplus \mathfrak{g}$.  At a point where $d\Phi$ is non-surjective this operator has a non-trivial kernel, corresponding to a non-zero solution $Z \in \CM_{\theta,n}^0$ of
\be\label{easing} \forall X,Y \in \gog, \quad \tr(Z\cdot ([X,A]+ [Y,B]))=0\,. \ee
Thus we must determine for which $(g_1, g_2)$ such a $Z$ exists.
\\

(1) Using the invariance $\rho_{A,B} = \rho_{gAg^{-1}, gBg^{-1}}$, we can reduce to the case $g_1=I$ (at the price of redefining $X,Y,Z$).
Then taking $Y=0$,  the condition (\ref{easing}) reduces to $$\forall X\in \gog,\quad \tr( X\cdot [Z,{\diag(\alpha_i)}])=0,$$ so that we must have $[Z,{\diag(\alpha_i)}]=0$. Since the eigenvalues ${\alpha}_i$ are assumed distinct, this implies that $Z$ is diagonal. 

(2) Having taken $g_1 = I$, we rewrite $g_2 = g$.  For $X=0$, $Y$ arbitrary, the condition (\ref{easing}) reads $[Z,B]=0$. 
\\
-- If $Z$ is regular, this implies that $B$  is diagonal, and since $\beta$ is regular, this implies that 
$g$ acts as a permutation: $B=\diag(\beta_{w_i})$ for some $w \in S_n$. Thus the ordered eigenvalues $\gamma_i$ of $C$
are
\begin{equation}
\label{gammai} 
\gamma_i = \alpha_{w'_i}+\beta_{w_i}\qquad w,w'\in S_n, \quad i=1,\cdots,n \,, \end{equation}
which is a particular case of (\ref{sing-hyperpl}).\\
-- If $Z$ has repeated eigenvalues, with eigenvalue $z_\ell$ of multiplicity $r_\ell$ for $\ell=1,\cdots,s$, 
 one may only assert that $B$ is block-diagonal,
with $s$ blocks of size $r_1, r_2,\cdots r_s$. For each block of size 1, one is back to the situation described in
(\ref{gammai}). 
For each block $B_\ell$  of size $r_\ell>1$, the partial trace over 
the corresponding block in $\diag(\alpha)+B$  is a sum of $r_\ell$ eigenvalues $\gamma_j$, which we write $\sum^{(\ell)} \gamma_j =\sum^{(\ell)} \alpha_j +\tr B_\ell$. 
But since the trace of the block $B_\ell$ is just the sum of the  $\beta_j$ pertaining to 
that block, $\tr B_\ell=\sum^{(\ell)} \beta_j$, we have 
$\sum^{(\ell)} \gamma_j =\sum^{(\ell)} \alpha_j +\sum^{(\ell)} \beta_j$, 
which we can rewrite in the form (\ref{sing-hyperpl}). Such a linear relation on the $\gamma$'s defines a
hyperplane in $\R^{n-1}$,  since we have assumed that $A$, $B$ and $C$ were traceless. \end{proof}

\bigskip
\noindent {\bf Remarks}\\
1. 
The possible singularities identified in (\ref{sing-hyperpl}) include the hyperplanes that contain the facets of the Horn polytope (other than the walls of the Weyl chamber), where the Horn PDF and volume function vanish in a non-$C^\infty$ way.
The reader will recognize in (\ref{sing-hyperpl}) the form of Horn's (in)equalities.
\\
2. 
The singular hyperplanes in $\gamma$-space depend only on $\alpha$ and $\beta$ and 
not on $\theta$. This is in agreement with Fulton's argument \cite{Fu} that Horn's inequalities are the same for  all three cases considered in this section. This also justifies the empirical observation made previously
that the singular locus, for given $\alpha$ and $\beta$, is the same in all three cases \cite{Z1,CZ2}.
\\
3. 
Eq. (\ref{sing-hyperpl})  is only a {\it necessary} condition for a non-analyticity. It doesn't tell us
on which hyperplanes a non-analyticity does in fact occur. Also, it doesn't tell us that all the points of that hyperplane are singular.
An example is provided by the case $n=3$ where some singularities occur along half-lines in the $(\gamma_1,\gamma_2)$--plane
\cite{Z1, CZ2}. 
\\ 
4. The argument above doesn't tell us anything about the {\it nature} of the singularity. Indeed 
much stronger singularities appear in the real symmetric case  (where $\CJ$ can actually diverge) than in the complex Hermitian or quaternionic self-dual cases \cite{CZ2}.  For $\theta = 1$ or $2$, and for all of the coadjoint cases, we can use explicit formulae for the orbital integrals to write $\CJ$ as a sum of Fourier transforms.  A power-counting argument (differentiating under the integral sign and using the Riemann--Lebesgue lemma) then yields a lower bound on the number of continuous derivatives of $\CJ$.  In the complex Hermitian case with $n \ge 3$, one expects the function to be {\it at least} of differentiability class $C^{n-3}$ (see \cite{Z1}). For example, it is continuous but non-differentiable 
for SU(3), and at least once continuously differentiable for SU(4).  For SU(2), $\CJ$ is the indicator function of $\mathscr{H}_{\alpha, \beta}$ and is therefore discontinuous at the boundary. 
For a geometric interpretation of these singularities, see sect. \ref{nature-sing-SUn} below.


\subsection{Riemannian interpretation of $\CJ$, and singularities in the real symmetric case}
\label{nature-sing-SOn}
In this section we interpret $\CJ$ in terms of the Riemannian geometry of the orbits.  The Riemannian interpretation can help to understand the origin of the divergences that can appear in $\CJ$ {when $\theta = 1/2$. It was observed in \cite{Z1, CZ2} that for $\SO(2)$ and $\SO(3)$ acting on real symmetric matrices, $\CJ$ actually tends to infinity as it approaches certain singular hyperplanes.  It is unknown whether $\CJ$ diverges for $\theta = 1/2$ and $n > 3$.} We follow the notation of sect. \ref{sec:intro_selfadjoint}, and we assume as before that $\alpha$ and $\beta$ are regular and traceless.

The inner product $\langle A, B \rangle = \mathrm{tr}(AB)$ gives a $G$-invariant Riemannian metric on $\CM_{\theta,n}$, so that we obtain $G$-invariant induced metrics on the orbits $\CO_\alpha$ and $\CO_\beta$.   The associated Riemannian volume measures $\mu_\alpha$ and $\mu_\beta$ are also $G$-invariant, so they must respectively equal $\mu_\alpha(\CO_\alpha)$ and $\mu_\beta(\CO_\beta)$ times the unique invariant probability measure on each orbit.  If $\alpha$ and $\beta$ are both regular as we assume, then we have $\mu_\alpha(\CO_\alpha) = \kappa'_\theta \Delta(\alpha)^{2\theta}$, $\mu_\beta(\CO_\beta) = \kappa'_\theta \Delta(\beta)^{2\theta}$, where $\Delta$ is the Vandermonde determinant, and the constant $\kappa'_\theta$ equals $\kappa_\theta$ except when $\theta = 1/2$ and $n$ is even, in which case $\kappa'_\theta = \oh \kappa_\theta$.

Let $\tilde \Psi: \CO_\alpha \times \CO_\beta \to \CC_+$ be the map that sends $(A,B)$ to the diagonalization of $A+B$ with non-increasing entries down the diagonal.  Define the measure $\nu$ on $\CO_\alpha \times \CO_\beta$ as the product measure of the normalized $G$-invariant measures on each orbit. If we endow $\CO_\alpha \times \CO_\beta$ with the product metric of the induced Riemannian metrics, so that its volume measure is the product $\mu_\alpha \otimes \mu_\beta$, then we find $\nu = \kappa_\theta'^{-2} (\Delta(\alpha) \Delta(\beta))^{-2\theta}(\mu_\alpha \otimes \mu_\beta)$.  The Horn probability measure on $\CC_+$ is the pushforward $\tilde \Psi_* \nu = \kappa_\theta'^{-2} (\Delta(\alpha) \Delta(\beta))^{-2\theta} \tilde \Psi_*(\mu_\alpha \otimes \mu_\beta)$.

We can rewrite the measure in a simpler form by eliminating one of the orbits from the domain of $\tilde \Psi$.  Recalling that $\tilde \Psi$ is invariant under the diagonal $G$-action on $\CO_\alpha \times \CO_\beta$, it suffices to consider the case $A = \alpha$ and the ``reduced'' map $\Psi: \CO_\beta \to \CC_+$ that sends $B \in \CO_\beta$ to the diagonalization of $\alpha + B$ with non-increasing entries down the diagonal.  The Horn probability measure is then equal to $\kappa_\theta'^{-1} \Delta(\beta)^{-2\theta} \Psi_* \mu_\beta$.

If $\gamma_0$ is a regular value of $\Psi$, then for a sufficiently small coordinate neighborhood $E \ni \gamma_0$ all fibers of $\Psi$ over $E$ are diffeomorphic, and $\Psi^{-1}(E)$ is diffeomorphic to $\Psi^{-1}(\gamma_0) \times E$.  Let $z_1, \hdots, z_m$ be local coordinates on the fiber $\Psi^{-1}(\gamma_0)$, where $m = \dim \CM_{\theta,n} - n +1$.  Then $(z, \gamma)$ are local coordinates on $\Psi^{-1}(E)$, so that for $\gamma$ sufficiently close to $\gamma_0$ we can write the Horn PDF as the fiber integral
\be \label{eqn:pdf-fiber-int}
p(\gamma | \alpha, \beta) = \kappa_\theta'^{-1} \Delta(\beta)^{-2\theta} \int_{\Psi^{-1}(\gamma_0)} \sqrt{g_\beta(z, \gamma)}\ dz,
\ee
where $g_\beta$ is the determinant of the induced metric on $\CO_\beta$ in our chosen coordinates.  Accordingly, we have
\be \label{eqn:J-fiber-int}
\CJ(\alpha, \beta ; \gamma) = \frac{(2\pi)^{\theta n(n-1)}}{\kappa'_\theta \kappa_\theta^{2}\ \Delta_\gog(\rho)^3} \left( \frac{\Delta(\alpha)}{\Delta(\beta) \Delta(\gamma)} \right)^\theta \int_{\Psi^{-1}(\gamma_0)} \sqrt{g_\beta(z, \gamma)}\ dz
\ee
for $\gamma$ sufficiently close to any regular value $\gamma_0$ of $\Psi$.

Equation (\ref{eqn:J-fiber-int}) gives the desired Riemannian interpretation of $\CJ$ and provides some geometric insight into the origin of the volume function's singularities.  In particular, this point of view helps to explain why $\CJ$ can {actually diverge in the real symmetric case.}  The integral appearing in (\ref{eqn:pdf-fiber-int}) and (\ref{eqn:J-fiber-int}) looks almost like the induced volume of the (compact) fiber $\Psi^{-1}(y)$, so it may be surprising at first that for $\theta = 1/2$, $\CJ$ can tend to infinity on the interior of $\mathscr{H}_{\alpha \beta}$.  However, this integral is not the volume of the fiber.  If $g_\beta^\top$ and $g_\beta^\perp$ are the determinants, respectively, of the restriction of the induced metric on $\CO_\beta$ to the tangent bundle and normal bundle of $\Psi^{-1}(\gamma)$, then we have
$$\mathrm{Vol}(\Psi^{-1}(\gamma)) = \int_{\Psi^{-1}(\gamma_0)} \sqrt{g^\top_\beta(z, \gamma) }\ dz,$$
whereas
$$\int_{\Psi^{-1}(\gamma_0)} \sqrt{g_\beta(z, \gamma)}\ dz = \int_{\Psi^{-1}(\gamma_0)} \sqrt{g^\top_\beta(z, \gamma)\, g^\perp_\beta(z, \gamma) }\ dz.$$
The factor $g_\beta^\perp$ can blow up as $\gamma$ approaches a non-regular value of $\Psi$. {This merely reflects a singularity of the idiosyncratic choice of coordinates $(z, \gamma)$, but it will cause $\CJ$ to diverge if $g_\beta^\top$ doesn't diminish sufficiently to compensate.}

\bigskip
 To illustrate this idea, we consider the simple example of $\SO(2)$ acting on 2-by-2 real symmetric matrices,  relaxing temporarily our assumption that $\alpha$ and $\beta$ are traceless.  In this case, the orbits of regular elements are circles embedded in $\CM_{\frac{1}{2},2} \cong \R^3$.  If we parametrize $\SO(2)$ as rotation matrices $$R(\phi) = \begin{bmatrix} \cos \phi & -\sin \phi \\ \sin \phi & \cos \phi \end{bmatrix}, \quad 0 \le \phi < 2\pi,$$ then the orbit of $\beta = \mathrm{diag}(\beta_1, \beta_2)$ is
$$R(\phi) \beta R(\phi)^T = \begin{bmatrix} \beta_1 \cos^2 \phi + \beta_2 \sin^2 \phi & (\beta_1 - \beta_2) \cos \phi \sin \phi \\ (\beta_1 - \beta_2) \cos \phi \sin \phi & \beta_1 \sin^2 \phi + \beta_2 \cos^2 \phi \end{bmatrix}, \quad 0 \le \phi < 2\pi.$$
We have $\CC_+ = \{ \diag(\gamma_1, \gamma_2) \ | \ \gamma_1 \ge \gamma_2 \ \mathrm{and} \ \gamma_1 + \gamma_2 = \alpha_1 + \alpha_2 + \beta_1 + \beta_2 \},$ so that $\CC_+ \cong [0, \infty)$ is parametrized by the single coordinate $\gamma_{12} := \gamma_1 - \gamma_2$.  The map $\Psi$ sends $\phi \in [0, 2\pi)$ to $\gamma_1 - \gamma_2$ where $\gamma_1, \gamma_2$ are the eigenvalues of $\alpha + R(\phi) \beta R(\phi)^T$, and we have $$\Psi(\phi) = \sqrt{\alpha_{12}^2 + \beta_{12}^2 + 2\alpha_{12} \beta_{12} \cos(2\phi)},$$ where $\alpha_{12} := \alpha_1 - \alpha_2,$ $\beta_{12} := \beta_1 - \beta_2$.  The image of $\Psi$ is the interval $[ |\alpha_{12} - \beta_{12}|, \alpha_{12} + \beta_{12} ]$.  We assume that $\alpha_1 \neq \alpha_2$ and $\beta_1 \neq \beta_2$, as otherwise this image is just a single point.  An explicit computation yields
$$\CJ(\alpha, \beta; \gamma) = \begin{cases}
\frac{2}{\pi^2} \sqrt{\frac{\alpha_{12} \beta_{12} \gamma_{12}}{((\alpha_{12} + \beta_{12})^2 - \gamma_{12}^2)(\gamma_{12}^2 - (\alpha_{12} - \beta_{12})^2)}}, & \gamma_{12} \in [ |\alpha_{12} - \beta_{12}|, \alpha_{12} + \beta_{12} ], \\ 0, & \textrm{otherwise.} \end{cases}$$
Figure \ref{fig:J1-plot} shows the plot of $\CJ$ as a function of $\gamma_{12}$ for $\alpha_{12} = 1$, $\beta_{12} = 2$.

 \begin{figure}[htb]
  \centering
       \includegraphics[width=20pc]{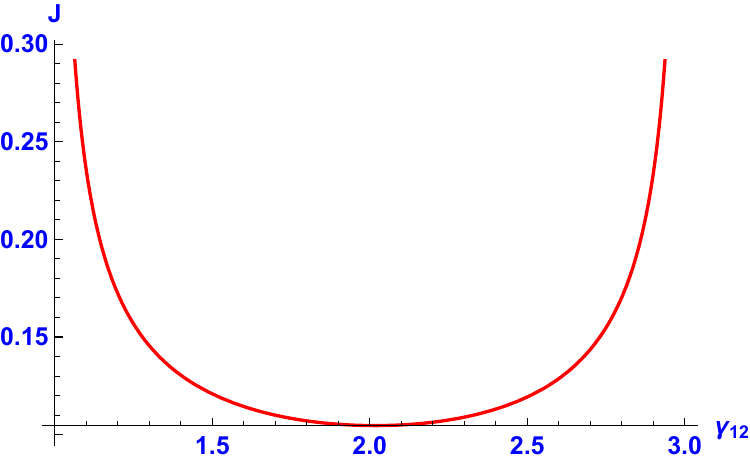}\caption{\label{J1-fig} The $\CJ$ function for $\alpha_{12}=1,\ \beta_{12}=2$.   }
        \label{fig:J1-plot}
       \end{figure}

$\CJ$ diverges at the endpoints $|\alpha_{12} - \beta_{12}| = \Psi(0)$ and $\alpha_{12} + \beta_{12} = \Psi(\pi)$, which are the non-regular values of $\Psi$.  The fiber of $\Psi$ over a regular value consists of two points, one in each of the open sets of $\CO_\beta$ parametrized by $0 < \phi < \pi$ and $\pi < \phi < 2\pi$.  In this case we may think of $\Psi$ as giving a coordinate chart on either of these two open sets; the density of the Riemannian volume form in these coordinates is just $\sqrt{g_\beta^\perp}$, which in this case is a function only of $\gamma_{12}$, for $\alpha_{12}, \beta_{12}$ fixed.  Since the fibers are 0-dimensional, the induced Riemannian volume on each fiber is just the counting measure, so that $\mathrm{Vol}(\Psi^{-1}(\gamma_{12})) = 2$ when $\gamma_{12}$ is a regular value.  Inverting the prefactor before the integral in (\ref{eqn:J-fiber-int}) and dividing by 2 to account for the volume of the fiber, we find that on either submanifold $\phi \in (0, \pi)$ or $\phi \in (\pi, 2\pi)$, the density of the Riemannian volume on $\CO_\beta$ can be expressed in terms of the local coordinate $\gamma_{12}$ as
$$\sqrt{g_\beta^\perp(\gamma_{12})} = \frac{\pi^3}{\sqrt{2}} \sqrt{\frac{\beta_{12} \gamma_{12}}{\alpha_{12}}} \CJ(\alpha_{12}, \beta_{12} ; \gamma_{12}).$$
The divergences of $\CJ$ at the endpoints indicate that this coordinate becomes singular as it approaches a non-regular value of $\Psi$.


\def\x{x}

\section{Relation $\CJ\leftrightarrow$ LR  in the coadjoint case}
\label{sectJ-LR}
For the remainder of this paper, we restrict our attention to the coadjoint case.
Let $G$ be a compact, connected, semisimple Lie group, $\gog$ its Lie algebra.  Here and in sect. \ref{sectVol} below, we will always make the further assumption that $\gog$ contains no simple summands isomorphic to $\mathfrak{su}(2)$; this assumption can be removed, but requires some additional care due to the discontinuity of $\CJ$ at the boundary of the Horn polytope in the $\mathfrak{su}(2)$ case (see \cite{CZ1}, sect. 4.1.1).

For $\x\in\mathfrak{t}$, a Cartan subalgebra, define\footnote{In the following, we use boldface $\Ga$ to denote roots, not
to confuse them with eigenvalues $\alpha_i$.}
\be\label{Deltas} \Delta_\gog(\x):= \prod_{\Ga >0} \langle \Ga,\x\rangle \qquad \hDelta_\gog(e^{\ii \x}):=  \prod_{\Ga >0} 
\Big(e^{\frac{\ii}{2}\langle \Ga,\x\rangle}- e^{-\frac{\ii}{2}\langle \Ga,\x\rangle  }\Big)\,.\ee
Then the Weyl--Kirillov formula for the character $\chi_\lambda$ of an  irreducible representation (irrep) $V_\lambda$ of 
highest weight (h.w.) $\lambda$ reads
\be\label{We-Ki} \frac{\chi_\lambda(e^{\ii \x})}{\dim V_\lambda}=\frac{\Delta_\gog(\ii \x)}{\hDelta_\gog(e^{\ii \x})} \CH(\lambda+\rho, \ii \x)\ee
with $\CH$ the orbital (Harish-Chandra) integral
\be\label{HC}  \CH(\lambda, \ii \x)=\int_G dg\, e^{\ii \langle \lambda, \mathrm{Ad}(g) \x\rangle}\,.\ee
Let $Q$ be the root lattice of $\gog$, $P$ the weight lattice, and $P^\vee$ the coweight lattice.  If $(\lambda,\mu,\nu)$ is a {\it compatible} triple of h.w., i.e. $\lambda+\mu-\nu \in Q$, and if we denote by primes 
$\lambda',\mu',\nu'$  the shift of $\lambda,\mu,\nu$ by the Weyl vector $\rho$, i.e. $\lambda'=\lambda+\rho$, etc., we have 
\bea\nonumber \CJ(\lambda',\mu';\nu') &=& \frac{ \dim V_\lambda \dim V_\mu \dim V_\nu }{(2\pi)^r |W|}
\int_{\mathfrak{t}} d^r \x\, |\Delta_\gog(\x)|^2\,  \CH(\lambda', \ii \x)\CH(\mu', \ii \x)(\CH(\nu', \ii \x))^*\\
\nonumber          &=& \frac{1}{(2\pi)^r |W|} \int_{\mathfrak{t}} d^r\x \,  |\hDelta_\gog(e^{\ii \x})|^2\, \frac{\hDelta_\gog(e^{\ii \x})}{\Delta_\gog({\ii} \x)} \, \chi_\lambda(e^{\ii \x}) \chi_\mu(e^{\ii \x})       (\chi_\nu(e^{\ii \x}) )^*\\
\nonumber  &=& \int_{\mathfrak{t} / 2 \pi P^\vee} d^r\x \, \frac{ |\hDelta_\gog(e^{\ii \x})|^2}{(2\pi)^r |W|}\sum_{\delta\in 2 \pi P^\vee}   \frac{\hDelta_\gog(e^{\ii (\x+\delta)})}{\Delta_\gog({\ii}(\x+\delta))} \, \chi_\lambda(e^{\ii { (\x+\delta)}}) \chi_\mu(e^{\ii {(\x+\delta)}})  (\chi_\nu(e^{\ii {(\x+\delta)}}) )^*. \eea
For $\delta \in 2 \pi P^\vee$, a direct computation yields the identity $\chi_\lambda(e^{\ii(x + \delta)}) = e^{\ii \langle \lambda, \delta \rangle} \chi_\lambda(e^{\ii x})$, and the compatibility of the triple $(\lambda,\mu,\nu)$ implies that $\exp \ii \langle (\lambda+\mu-\nu),\delta\rangle =1$.  Moreover, $\mathfrak{t} / 2 \pi P^\vee \cong \T / Z$, where $\T \subset G$ is the maximal torus with Lie algebra $\mathfrak{t}$, and $Z$ is the center of $G$.  Using these facts, we can rewrite the final line above as an integral over $\T$:
\bea \nonumber \CJ(\lambda',\mu';\nu') &=& \int_{\mathfrak{t} / 2 \pi P^\vee}   d^r\x \, \frac{ |\hDelta_\gog(e^{\ii \x})|^2}{(2\pi)^r |W|}     \left( \sum_{\delta\in 2 \pi P^\vee}  e^{\ii \langle \rho, \delta\rangle}   \frac{\hDelta_\gog({e^{\ii \x}})}{\Delta_\gog({\ii}(\x+\delta))}  \right)\, \chi_\lambda(e^{\ii \x}) \chi_\mu(e^{\ii  \x})  (\chi_\nu(e^{\ii  \x}) )^* \\
  \label{lasteq}&=&{ \int_\T       dT \left( \sum_{\delta\in 2 \pi P^\vee}  e^{\ii \langle \rho, \delta\rangle}   \frac{\hDelta_\gog({e^{\ii \x}})}{\Delta_\gog({\ii}(\x+\delta))}  \right)\, \chi_\lambda(e^{\ii \x}) \chi_\mu(e^{\ii  \x})  (\chi_\nu(e^{\ii  \x}) )^*}
           \eea
where$$dT=  \frac{\ |\hDelta_\gog(e^{\ii \x})|^2}{(2\pi)^r |W| |Z|} \, d^r\x.$$
By a similar calculation, we find that for $(\lambda,\mu,\nu)$ compatible,
\be\label{equnsh} \CJ(\lambda,\mu;\nu)= {\ \int_\T       dT \left( \sum_{\delta\in 2 \pi P^\vee} 
  \frac{\hDelta_\gog({e^{\ii \x}})}{\Delta_\gog({\ii}(\x+\delta))}  \right)\, \chi_{\lambda-\rho}(e^{\ii \x}) \chi_{\mu-\rho}(e^{\ii  \x})  (\chi_{\nu-\rho}(e^{\ii  \x}) )^*}\,. \ee

Now, as observed in \cite{CZ1} in the case of $G=\SU(n)$ and proved in full generality in \cite{ER}, the 
two sums over the coweight lattice $P^\vee$  that appear in (\ref{lasteq}, \ref{equnsh})
can each be expressed as a finite sum of characters over a set $K$, resp. $\hat K$, of h.w.
\be\label{CZER}  R := {\sum_{\delta\in 2\pi P^\vee}} e^{\ii \langle \rho, \delta\rangle} 
\frac{\hDelta_\gog({e^{\ii \x}})}{\Delta_\gog(\ii(\x+\delta))} = \sum_{\kappa\in K} \c_\kappa \chi_\kappa(T),\qquad 
\hat R := {\sum_{\delta\in 2\pi P^\vee}}\frac{\hDelta_\gog({e^{\ii \x}})}{\Delta_\gog(\ii(\x+\delta))} = \sum_{\kappa\in \hat K} \hat \c_\kappa \chi_\kappa(T)\,,\ee
where $\c_\kappa$ and $\hat\c_\kappa$ are coefficients to be determined, see below.
 According to \cite{ER}, in order for the character $\chi_\kappa$ to occur in the sum $R$ defined in (\ref{CZER}),  
$\kappa$ should belong to the interior of the convex hull of the Weyl group orbit of $\rho$.
{More precisely,}  $K$ is  the set of dominant weights   occurring in the irrep of h.w. $\rho-\xi$, where 
\be\label{xi} \xi=\begin{cases}  \sum_{i=1}^r \Ga_i \quad \mathrm{(sum\ of\ simple\ roots)} & \mathrm{ if}\   \rho\in Q\\
\sum_{i=1}^r k_i \Ga_i &  \mathrm{ if}\  \rho\notin Q ,\ \mathrm{with}\ \   
k_i=\begin{cases} 1 & \mathrm{if} \ \langle \rho, \omega_i^\vee\rangle \in \Z\\
\oh & \mathrm{if} \ \langle \rho, \omega_i^\vee\rangle \notin \Z
  \end{cases} 
\end{cases} \ee
 where the $\omega_i,\ i=1,\cdots, r$, are the fundamental weights.
 Observe that $\rho-\xi$, and therefore all weights  $\kappa\in K$, must lie in the root lattice.
 Obviously $\hat R$ and $R$ differ only if  $\rho\not\in Q$, {in which case $\hat K$ consists of the dominant weights occurring in the irrep of h.w. $\rho-\hat\xi$,} where $\hat\xi$ 
is obtained by  swapping the two lines above:
\be\label{xi}  \hat\xi= \sum_{i=1}^r \hat k_i \Ga_i  
\qquad  \mathrm{with}\ \   
\hat k_i=\begin{cases} \oh & \mathrm{if} \ \langle \rho, \omega_i^\vee\rangle \in \Z\\
1 & \mathrm{if} \ \langle \rho, \omega_i^\vee\rangle \notin \Z.
\end{cases} \ee
{Note that the trivial weight 0 always occurs in $K$, but never occurs in $\hat K$ when $\rho \not \in Q$.}

\bigskip
\noindent
{\bf Examples}. For SU(3) we have $\rho-\xi=0$, and $R  = \hat R$, given by the r.h.s. of   (\ref{CZER}), equals 1. For SU(4) we have $\rho-\xi=\omega_1+\omega_3$. {One finds $K = \{ (0, 0, 0),\ (1,0,1) \}$, with} $\c_0= 9/24$ and $\c_{\rho-\xi}=1/24$,  (see \cite{CZ1}, eq (62a)). 
 {One finds also $\rho-\hat \xi=\omega_2$, $\hat K = \footnotesize{\{(0, 1, 0)\}}$, and $ \hat \c_\kappa= \{1/6\}$  (see \cite{CZ1}, eq (62b)).}

More generally, for the $A_r$ series, one has $\xi = \sum_{i \, \text{odd}} \Ga_i/2 + \sum_{i \, \text{even}} \Ga_i$ {if $r$ is odd, and $\xi = \sum_{i } \Ga_i$ if $r$ is even.}
{Moreover $\hat \xi = \xi$ if $r$ is even and $\hat \xi = \sum_{i } \Ga_i$ if $r$ is odd.}
Explicit results for $R$, \ie for $\chi_\kappa$ and $\c_\kappa$, in the cases SU(5) and SU(6),  are also given in \cite{CZ1}, see sections 4.2.2, 4.2.3, 4.2.4, {as well as the results for $\hat R$ in the case of  SU(6)}.

 For $\SO(5)$, $\rho=\oh (3\Ga_1+4\Ga_2),\ \xi=\oh \Ga_1+\Ga_2$, $\rho-\xi=\Ga_1+\Ga_2=\omega_1$, $K=\{(0,0), (1,0)\}$, and  $R=\inv{8}(3\chi_0 +\chi_{\omega_1})$.
{One also finds $\hat \xi = \Ga_1+\Ga_2$,  $\hat K = \footnotesize{\{(0,1)\}}$ and {$\hat \c_{(0,1)} = 1/4$}.}

For $\SO(7)$, we find $K = \footnotesize{\{(0, 0, 0), (1, 0, 0), (0, 1, 0), (2, 0, 0), (0, 0, 2), (1, 1, 0), (1, 0, 2)\}}$ and the corresponding list of coefficients $\c_\kappa$: 
$\inv{92160}\{7230, 3995, 1651, 85, 479, 29, 1\}$.
{One also finds $\hat \xi = \Ga_1+\Ga_2 + \Ga_3$, $\hat K =  \footnotesize{\{(0, 0, 1),(1, 0, 1), (0, 1, 1)\}}$ and $\hat \c_\kappa =\inv{2880} \{190, 26, 1\}$.}

\bigskip
Introducing (generalized) Littlewood--Richardson (LR) coefficients
\begin{align*}
C_{\lambda\,\mu}^\nu &:= \dim \mathrm{Hom}_\gog(V_\lambda\otimes V_\mu\ \to V_\nu), \\
C_{\lambda\,\mu\,  \kappa}^\nu &:= \dim \mathrm{Hom}_\gog(V_\lambda\otimes V_\mu \otimes V_\kappa \to V_\nu),
\end{align*}
and putting (\ref{CZER}) into (\ref{equnsh}), we finally obtain: 
\begin{proposition}{\label{prop2}
Let $(\lambda,\mu;\nu)$ be a compatible triple of h.w., i.e.~such that $\lambda+\mu-\nu\in Q$. 
Let $(\lambda',\mu';\nu')$ be the corresponding triple shifted by the Weyl vector $\rho$: $\lambda'=\lambda+\rho$, etc.
Then we have the two
{\bf $\CJ$-LR  relations}:
\bea \label{In-LR} \CJ(\lambda',\mu';\nu') &=&\sum_{\kappa\in K, {\tau}} \c_\kappa  C_{\lambda\,\mu}^{{\tau}}  C_{{\tau}\,\kappa}^{{\nu}}
   =\sum_{\kappa\in K} \c_\kappa C_{\lambda\,\mu\,\kappa}^{\,\nu} \\
 \label{In-LRp}   \CJ(\lambda,\mu;\nu) &=&  \sum_{\kappa\in \hat K, {\tau}} \hat\c_\kappa  C_{(\lambda-\rho)\,(\mu-\rho)}^{{\tau}}  C_{{\tau}\,\kappa}^{{\nu-\rho}}
   =\sum_{\kappa\in \hat K} \hat\c_\kappa C_{(\lambda-\rho)\,(\mu-\rho)\,\kappa}^{\,\nu-\rho} 
\,. \eea}
\end{proposition}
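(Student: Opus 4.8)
The plan is to feed the finite character expansions~(\ref{CZER}) of $R$ and $\hat R$ into the torus‑integral formulae~(\ref{lasteq}) and~(\ref{equnsh}) for $\CJ$, and then read off the answer using orthogonality of characters. For the shifted identity~(\ref{In-LR}) I would start from the last line of~(\ref{lasteq}). By the result of~\cite{ER}, the lattice sum $R=\sum_{\delta\in 2\pi P^\vee}e^{\ii\langle\rho,\delta\rangle}\hDelta_\gog(e^{\ii(\x+\delta)})/\Delta_\gog(\ii(\x+\delta))$ is a genuine (smooth) class function on the Cartan torus, equal to the finite sum $\sum_{\kappa\in K}\c_\kappa\chi_\kappa$; substituting this and exchanging the finite sum with the integral gives
\[
\CJ(\lambda',\mu';\nu') \;=\; \sum_{\kappa\in K}\c_\kappa\int_\T dT\;\chi_\kappa(e^{\ii\x})\,\chi_\lambda(e^{\ii\x})\,\chi_\mu(e^{\ii\x})\,\overline{\chi_\nu(e^{\ii\x})}\,.
\]

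Next I would identify each integral. The irreducible characters are orthonormal for the measure $dT$ --- this is the Weyl integration formula, and it is exactly what makes the chain~(\ref{lasteq}) valid --- so $\int_\T dT\,\chi_\kappa\chi_\lambda\chi_\mu\,\overline{\chi_\nu}$ equals the multiplicity of $V_\nu$ in $V_\lambda\otimes V_\mu\otimes V_\kappa$, i.e. $C^\nu_{\lambda\,\mu\,\kappa}$; this gives the second form of~(\ref{In-LR}). To obtain the first form one decomposes the triple tensor product associatively, $V_\lambda\otimes V_\mu\cong\bigoplus_\tau V_\tau^{\oplus C^\tau_{\lambda\,\mu}}$ and then $V_\tau\otimes V_\kappa\cong\bigoplus_\nu V_\nu^{\oplus C^\nu_{\tau\,\kappa}}$, whence $C^\nu_{\lambda\,\mu\,\kappa}=\sum_\tau C^\tau_{\lambda\,\mu}C^\nu_{\tau\,\kappa}$. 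The unshifted identity~(\ref{In-LRp}) follows by the identical argument applied to~(\ref{equnsh}) and the second expansion in~(\ref{CZER}), the only change being that the characters appearing are $\chi_{\lambda-\rho},\chi_{\mu-\rho},\overline{\chi_{\nu-\rho}}$. Here I would add the remark that, by the Weyl character formula, $\chi_{\sigma-\rho}$ vanishes identically whenever $\sigma$ lies on a wall of the dominant chamber, so~(\ref{In-LRp}) is automatically consistent with the vanishing of $\CJ$ at non‑regular arguments and is substantive precisely when $\lambda,\mu,\nu$ are regular, in which case $\lambda-\rho,\mu-\rho,\nu-\rho$ are dominant weights and the coefficients $C^\tau_{(\lambda-\rho)(\mu-\rho)}$, $C^\nu_{\tau\,\kappa}$ are the ordinary LR numbers.

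The one step requiring genuine care is the interchange of the infinite lattice sum defining $R$ (resp. $\hat R$) with the torus integral. The summands $\hDelta_\gog(e^{\ii(\x+\delta)})/\Delta_\gog(\ii(\x+\delta))$ decay only like $|\delta|^{-N_r}$ along generic directions of $2\pi P^\vee$, so the series converges merely conditionally --- which is exactly why simple summands isomorphic to $\mathfrak{su}(2)$, where $N_r=\mathrm{rank}=1$, are excluded. I would deal with this by invoking~\cite{ER} at the outset: once the lattice sum is known to be a finite, hence smooth, combination of characters on the compact torus $\T$, the exchange with $\int_\T dT$ is immediate, and everything after that is just orthogonality of irreducible characters (via Peter--Weyl / Weyl integration) together with associativity of $\otimes$.
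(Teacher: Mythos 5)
Your proposal is correct and follows essentially the same route as the paper: substitute the finite character expansions~(\ref{CZER}) of $R$ and $\hat R$ into the torus-integral expressions~(\ref{lasteq}) and~(\ref{equnsh}), use orthogonality of irreducible characters under the Weyl integration measure $dT$ to identify each resulting integral as the triple-product multiplicity $C^{\nu}_{\lambda\,\mu\,\kappa}$ (resp. $C^{\nu-\rho}_{(\lambda-\rho)\,(\mu-\rho)\,\kappa}$), and expand via associativity of $\otimes$ to get the two-index forms. The paper states this final step very tersely (``putting (\ref{CZER}) into (\ref{equnsh}), we finally obtain''), and your write-up simply fills in the implicit orthogonality argument and the remark that the finiteness of the character expansion from~\cite{ER} licenses the exchange of sum and integral.
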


\noindent {\bf Remarks.} \\ 1. 
The previous derivation generalizes and simplifies substantially the discussion given in \cite{CZ1} for  the case of $G=\SU(n)$.
 \\
 2. The coefficients $\c_\kappa,\, \hat\c_\kappa$ may be determined either by 
 a direct calculation of the sums in (\ref{CZER}), (as it was done in \cite{CZ1}), or by a geometric argument \cite{ER}, 
or by noticing that at the following special points, 
(\ref{In-LR},\ref{In-LRp}) reduce to:\footnote{The r.h.s. of (\ref{kissinger}), interpreted as a volume as we shall see in the next section, can be read, for instance, from the (stretched) LR polynomial defined by the triples that appear as arguments of $\CJ$, or, for low rank, computed from explicit expressions such as those in \cite{Z1} or below in sect. \ref{CJ2-SO5}.}

\be \label{kissinger} \c_\kappa= \CJ(\rho,\rho,\kappa+\rho) {,\ \kappa\in K\qquad\mathrm{and} \qquad \hat\c_\kappa=\CJ(\rho,\rho,\kappa+\rho) ,\ \kappa\in \hat K }  \,. \ee
3. Taking the limit $x\to 0$ in (\ref{CZER}), we see\footnote{
One may use  this relation
and the dimensions $\text{dim}\, (V_\kappa)$, $\kappa\in K$, to check the coefficients $\c_\kappa$. 
In the above examples of $B_2$ and $B_3$, the  dimensions $\text{dim}\, (V_\kappa)$ for the representations with $\kappa \in K$ are respectively $\{1, 5\}$ and $\{1, 7, 21, 27, 35, 105, 189\}$.}
that 
 \be\label{sumkappa} \sum_{\kappa\in  K} \c_\kappa \dim V_\kappa=1{\qquad\mathrm{and} \qquad  \sum_{\kappa\in \hat K} \hat\c_\kappa \dim V_\kappa=1} \,.\ee
 4. For $\nu$ deep enough in the dominant Weyl chamber, so that all the weights $\nu-\rho-k$ are dominant 
when $k$ runs over the set $\mathrm{wt}(\kappa)$ of weights of each $V_\kappa$, the r.h.s. of (\ref{In-LR}-\ref{In-LRp}) may be written explicitly
\be\label{deep-enough}
\CJ(\lambda,\mu;\nu)=\sum_{\kappa\in \hat K} \hat\c_\kappa  \sum_{k\in \mathrm{wt}(\kappa)}\,  \mult_\kappa(k)\,  C_{(\lambda-\rho)\,(\mu-\rho) }^{{\nu-\rho-k}} \,, 
 \ee
  and a similar formula for $\CJ(\lambda',\mu';\nu')$.  See examples  for $A_r$ in \cite{CZ1}  and for  $B_2$ in sect. \ref{many-ways} below. \\
  5. Multiplying (\ref{In-LR}) by $\dim V_\nu = \Delta_\gog(\nu')/\Delta_\gog(\rho)$, summing over $\nu$, and using (\ref{sumkappa}) along with the identity
  $\sum_\nu C_{\lambda \mu}^\nu \dim V_\nu = \dim V_\lambda \dim V_\mu,$
  one finds:
  $$\sum_\nu \CJ(\lambda', \mu' ; \nu') \frac{\Delta_\gog(\nu') \Delta_\gog(\rho)}{\Delta_\gog(\lambda') \Delta_\gog(\mu')} = \sum_\nu p(\nu' | \lambda', \mu') = 1,$$
so that the PDF (\ref{PDF-coadjoint}) also satisfies a discrete normalization condition.


\section{ Coadjoint case:  $\CJ$ as the volume of a polytope} 
\label{sectVol}
In this section we show that $\CJ(\alpha, \beta; \gamma)$ is equal to the (relative) volume of a certain convex polytope, the {\it Berenstein-Zelevinsky (BZ) polytope} $H_{\alpha \beta}^\gamma$, and we explore some consequences of this fact.  The primary importance of the BZ polytope is that the tensor product multiplicity $C_{\lambda \mu}^\nu$ is equal to the number of integer points in $H_{\lambda \mu}^\nu$.  This fact provides another perspective on the link between $\CJ$ and tensor product multiplicities.  In sect. \ref{sec:BZpol} we recall the definition of the BZ polytope, and we show in sect. \ref{sect:J-BZvol} that $\CJ$ computes its volume. In sect. \ref{sec:interior-nonvanish} we use this geometric interpretation to show that $\CJ$ cannot vanish on the interior of the Horn polytope, and in sect. \ref{nature-sing-SUn} we discuss how the non-analyticities of $\CJ$ arise from changes in the geometry of $H_{\alpha \beta}^\gamma$ as $\gamma$ varies.


\subsection{The BZ polytope}
\label{sec:BZpol}
Following Berenstein and Zelevinsky  {(BZ)} \cite{BZ1,BZ2,BZ3}, one may determine the LR coefficient $C_{\lambda \mu}^\nu$ pertaining to a compact or complex semisimple Lie algebra $\gog$ of rank $r$ by counting the number of integer points of a certain convex polytope, the {\it BZ polytope} 
(in the $A_r$ case it is closely related to the {\it hive polytope}\footnote{Actually the BZ-polytope is the image of the hive polytope under an injective lattice-preserving linear map \cite{PakValejo}, so that one can identify them for the purpose of counting arguments; however the Euclidean volumes of the two polytopes differ by an $r$-dependent constant.}  of \cite{KT99}),
which we denote $H_{\lambda\mu}^\nu$.  We will show below that the volume function $\CJ(\lambda,\mu;\nu)$ { is proportional to} the Euclidean volume of this polytope.  
Intuitively, for a compatible triple of h.w. $(\lambda, \mu, \nu)$, if the polytope $H_{\lambda\mu}^\nu$ is very large then we expect that the number of its integer points should give a very good approximation of its volume. In practice there are some additional subtleties because we want to count the integer points in a space of higher dimension than $H_{\lambda\mu}^\nu$, however at a heuristic level this intuition illustrates geometrically why $\CJ(\lambda,\mu;\nu)$ 
can be considered as a semiclassical approximation of $C_{\lambda \mu}^\nu$.

 The BZ construction hinges on the result that $C_{\lambda \mu}^\nu$ equals the number of ways of decomposing the weight $\sigma = \lambda + \mu - \nu$ as a positive integer combination of positive roots, 
  such that the decomposition also satisfies some additional combinatorial constraints.  To construct the polytope $H_{\lambda\mu}^\nu$, one therefore starts by introducing real parameters $\t_1, \hdots, \t_{N_r}$, where $N_r$ denotes the number of {\it positive} roots $\Ga_1, \hdots, \Ga_{N_r}$ of $\gog$.  A decomposition of $\sigma$ as a positive linear combination of positive roots corresponds to a point in the polytope of  {\it $\mathfrak{g}$-partitions with weight $\sigma$},
\be\label{part-poly}\mathrm{Part}_\mathfrak{g}(\sigma) = \big \{ (\t_1, \hdots, \t_{N_r}) : \sum_{a=1}^{N_r} \t_a \Ga_a = \sigma \big \} \cap \R^{N_r}_+, \ee
where $\R^{N_r}_+$ is the positive orthant in $\R^{N_r}$.  Since $\sigma$ lies in an $r$-dimensional space, we have $\dim \mathrm{Part}_\mathfrak{g}(\sigma) = N_r - r =: d_r$.  Positive {\it integer} decompositions of $\sigma$ correspond to integer points of $\mathrm{Part}_\mathfrak{g}(\sigma)$.  Additional linear constraints must still be imposed on these integer decompositions, so that the BZ polytope $H_{\lambda \mu}^\nu$ is finally obtained by intersecting $\mathrm{Part}_\mathfrak{g}(\sigma)$ with some number of half-spaces.   Generically we have $\dim H_{\lambda \mu}^\nu = d_r$, but for non-generic triples we may have $\dim H_{\lambda \mu}^\nu =: d \le d_r$.

One may alternatively introduce the quantity  
$f_r=N_r+2r$ that, in the case $A_r$, is the number of  ``independent fundamental intertwiners" (see  \cite[sect. 4]{CZ3}),
and then impose $3r$ conditions on the three weights $\lambda,\mu,\nu$ 
recovering $d_r = f_r-3r=N_r-r$.    
Finally, the $d_r$ independent parameters are again  subject to linear inequalities, 
 thus defining the convex polytope $H_{\lambda\mu}^\nu$.

 Note that we have defined $H_{\lambda \mu}^\nu$ as the solution set of a system of equations and inequalities that depend linearly on $\lambda,\ \mu$ and $\nu$.  We can therefore talk about the BZ polytope $H_{\alpha \beta}^\gamma$ associated to {\it any} triple of points $(\alpha, \beta, \gamma)$ in the dominant Weyl chamber, which may not be compatible highest weights or even rational points.
 
 In the case that $(\lambda, \mu, \nu)$ is indeed a compatible triple, $H_{\lambda\mu}^\nu$ is not in general integral, but it is always rational. Upon scaling by a positive integer $s$,
$P_{\lambda\mu}^{\nu}(s):=C_{s \lambda\,s\mu}^{s\nu}$ is a  {\it quasi-polynomial}  of the variable $s$, and is the Ehrhart quasi-polynomial
of the polytope $H_{\lambda\mu}^\nu$ \cite{Stanley}. 
It is sometimes called the {\it stretching 
quasi-polynomial} or {\it LR quasi-polynomial}. \\

{\bf Remarks}\\
1.   The definition of $P_{\lambda\mu}^{\nu}(s)$ makes sense whether or not  $(\lambda, \mu, \nu)$ is a compatible triple.
In many instances in the literature, like the construction of the BZ polytope and the discussion of saturation or of the properties of the stretching polynomial, compatibility of the triple is generally assumed. Since in the present paper we do not limit our consideration to compatible triples,\footnote{For instance, as noted in the previous section, for some algebras if a triple is compatible then the corresponding Weyl shifted triple is not.  This occurs for example in the case of $B_2$.} we will make clear the hypothesis of compatibility whenever it is necessary.\\
 2. 
When discussing such topics (stretching polynomials, saturation property, etc.) in terms of the representation theory of semisimple compact Lie groups rather than semisimple Lie algebras,  
one should be specific about the group under consideration since the conclusions will usually differ if one compares two Lie groups with the same Lie algebra but different fundamental groups. 
In the present paper, even though we consider the coadjoint representation of the orthogonal group $\SO(n)$, the tensor product multiplicities that arise in this setting for algebras of types $B$ or $D$ actually correspond to those of the simply-connected group Spin$(n)$.

\bigskip
\begin{center}
\begin{tabular}{|c|| c|c|c|c|c}
\hline
$\mathfrak{g}$ & $N_r$ & {$f_r$} & $d_r$ & $\delta_r$   \\
\hline\hline
$A_r$  & {$\oh r(r+1)$} & {$ \frac{1}{2} r (r+5)$} & $\oh r(r-1)$  & $(r + 1)^{r - 1}$ \\
\hline
$B_r$  &$ r^2$ & {$r (r+2) $}  & $r(r-1)$ & $(2 r - 1)^r$\\
\hline
$C_r$  &$r^2$ & {$r (r+2) $}   &$r(r-1)$  &  $2^{r - 2}  (r + 1)^r $\\
\hline
$D_r$&$ r(r-1)$ & {$r (1 + r)$}  & $r(r-2)$ & $2^{r - 2}  (r - 1)^r $\\
\hline
\end{tabular} 
\ {\begin{tabular}{|c|| c|c|c|c|c}
\hline
$\mathfrak{g}$ & $N_r$ & {$f_r$} & $d_r$ & $\delta_r$  \\
\hline\hline
$E_6$  & $36$ & $48$ & $30$ & $ 2^{12} 3^5$ \\
\hline
$E_7$   & $63$ & $77$  & $56$ & $2^ 6 3^{14} $ \\
\hline
$E_8$  & $120$ & $136$  & $112$ & $2^8 3^8 5^8 $ \\
\hline
$F_4$  & $24$ & $32$  & $20$ & $2^ 2 3^8 $ \\
\hline
$G_2$ & $6$ & $10$  & $4$ & $2^4 3$ \\ 
\hline
\end{tabular} }\\[8pt]
{Table 1. The numbers $N_r,\, f_r,\, d_r, \delta_r$ for the various simple algebras. The quantity $\delta_r$, discussed in the appendix, is a conjectured value of the squared covolume of the lattice $\Lambda$ defined in sect. \ref{sect:J-BZvol}. }
\end{center}


\subsection{$\CJ$ and the Euclidean volume of the BZ polytope }
\label{sect:J-BZvol}
 In this section we show that $\CJ(\alpha, \beta; \gamma)$ is proportional to the Euclidean $d_r$-volume of the BZ polytope $H_{\alpha \beta}^\gamma$ for an arbitrary compact or complex semisimple Lie algebra $\gog$ with no $\mathfrak{su}(2)$ summands. Specifically, we show that $\CJ$ equals the {\it relative $d_r$-volume} of the BZ polytope, defined as the Euclidean  $d_r$-volume divided by the covolume (volume of a fundamental domain) of the affine lattice $\Lambda := \mathrm{aff}(H_{\alpha \beta}^\gamma) \cap \Z^{N_r}$. Here $\mathrm{aff}(H_{\alpha \beta}^\gamma)$ indicates the affine span of $H_{\alpha \beta}^\gamma$, i.e. the minimal affine subspace of $\R^{N^r}$ containing $H_{\alpha \beta}^\gamma$, and $\Z^{N_r}$ is the integer lattice of the parameters $\t_a$ appearing in (\ref{part-poly}).  When $\dim H_{\alpha \beta}^\gamma = d_r$ we have $\mathrm{aff}(H_{\alpha \beta}^\gamma) = \mathrm{aff}(\mathrm{Part}_\gog(\sigma))$, and the covolume of $\Lambda$ is a constant $c_\gog$ that depends only on the root system of the algebra $\gog$, so that we have \be \rvol(H_{\alpha \beta}^\gamma) = \frac{1}{c_\gog} \mathrm{Vol}(H_{\alpha \beta}^\gamma), \ee where $\rvol$ is the relative $d_r$-volume and $\mathrm{Vol}$ is the Euclidean $d_r$-volume.   We discuss the covolumes $c_\gog$ in more detail in the appendix, where we explain the conjectured values for $\delta_r := c_\gog^2$ appearing above in Table 1.  Below when we refer to the ``volume'' of the BZ polytope, it will be understood that we mean the relative volume.  (Note that the relative volume is not the same thing as the {\it normalized volume} considered in \cite{CZ1}.)

   In the $A_r$ case (i.e.~$\SU(r+1)$ acting on traceless Hermitian matrices), the relationship between the volume function  $\CJ(\alpha, \beta; \gamma)$ and the Euclidean volume of $H_{\alpha \beta}^\gamma$ follows from the well-known fact that the stretching quasi-polynomials (for compatible triples) are genuine polynomials \cite{DW,Rass}.  In fact for $A_r$, if one considers the hive polytope rather than the BZ polytope, it turns out that $\CJ$ exactly computes the Euclidean volume, without a covolume factor.  Since this has been treated in several other places we shall not dwell on the matter, and instead refer the reader to the paper \cite{CZ1} for a more detailed discussion.  In the remainder of this section we treat the general case, namely:

\begin{proposition}
\label{prop:J-equals-BZvol}
Let $\gog$ be any compact semisimple Lie algebra without $\mathfrak{su}(2)$ summands. Then for any $\alpha, \beta, \gamma \in \CC_+$,
\be \CJ(\alpha, \beta; \gamma) = \rvol(H_{\alpha \beta}^\gamma). \ee
\end{proposition}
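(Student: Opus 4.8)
The plan is to identify $\CJ(\alpha,\beta;\gamma)$ with a Duistermaat--Heckman-type density and then recognize that density as the relative volume of the polytope $H_{\alpha\beta}^\gamma$ via the Berenstein--Zelevinsky realization of the moment polytope. The starting point is equation (\ref{eqn:DH-density}) together with the identity $\CJ(\alpha,\beta;\gamma) = (2\pi)^{N_r}\Delta_\gog(\rho)\,h_{\alpha\beta}^\gamma(0)$, which already expresses $\CJ$ as the value at $0$ of the Radon--Nikodym derivative of the pushforward of Liouville measure on $\CO_\alpha\times\CO_\beta\times\CO_{-\gamma}$ under the moment map $\phi(A,B,-C)=A+B-C$. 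By the Duistermaat--Heckman theorem, $h_{\alpha\beta}^\gamma(0)$ equals the symplectic volume of the symplectic reduction $\phi^{-1}(0)/G$. So it suffices to prove that this reduced symplectic volume equals $\rvol(H_{\alpha\beta}^\gamma)$.

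The heart of the argument is to exhibit a second description of $\phi^{-1}(0)/G$ (or of a space with the same symplectic volume) as a toric-type degeneration or GIT quotient whose moment polytope is exactly the BZ polytope $H_{\alpha\beta}^\gamma$ sitting inside the parameter space $\R^{N_r}$ of (\ref{part-poly}). Concretely, I would invoke the fact (this is the ``more general phenomenon'' alluded to in the introduction, cf. \cite{Heck,GLS}) that the multiplicity $C_{\lambda\mu}^\nu$ of $V_\nu$ in $V_\lambda\otimes V_\mu$ is the dimension of the space of holomorphic sections of a line bundle over $\phi^{-1}(0)/G$, realized by geometric quantization of $\CO_\lambda\times\CO_\mu\times\CO_{-\nu}$; by the quantization-commutes-with-reduction principle, the Hilbert polynomial of this line bundle is the stretching quasi-polynomial $P_{\lambda\mu}^\nu(s)=C_{s\lambda,s\mu}^{s\nu}$. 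On the other hand, by the BZ theorem $P_{\lambda\mu}^\nu(s)$ is the Ehrhart quasi-polynomial of $H_{\lambda\mu}^\nu$. Comparing leading terms: the leading coefficient of the Ehrhart quasi-polynomial of a $d_r$-dimensional rational polytope is its relative $d_r$-volume, while the leading coefficient of the Hilbert polynomial of the reduced space is its symplectic volume $h_{\lambda\mu}^\nu(0)$ (up to the normalization already built into the Weyl-shift conventions (\ref{eqn:J-Weyl-dimension})). This forces $\CJ(\lambda,\mu;\nu)=\rvol(H_{\lambda\mu}^\nu)$ on the dense set of compatible Weyl-shifted triples, and since both sides are continuous (indeed piecewise polynomial) functions of $(\alpha,\beta,\gamma)\in\CC_+$ that agree on a Zariski-dense set, they agree everywhere.

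There is a bookkeeping subtlety that I would treat carefully: the Weyl shift. The relation between $\CJ$ and multiplicities in Proposition~\ref{prop2} involves shifting by $\rho$, and the BZ polytope $H_{\lambda\mu}^\nu$ is defined in terms of the \emph{unshifted} weights, with $\sigma=\lambda+\mu-\nu$. The asymptotics of $C_{s\lambda,s\mu}^{s\nu}$ as $s\to\infty$ are insensitive to a fixed additive shift of the highest weights, so the leading term of the stretching quasi-polynomial equals both $\rvol(H_{\lambda\mu}^\nu)$ and $\CJ(\lambda,\mu;\nu)=\CJ(\lambda',\mu';\nu')+O(\text{lower order})$ in the stretched variable; I would make this precise by stretching and extracting the top-degree coefficient, using that $\CJ$ is homogeneous of degree $d_r$ under simultaneous scaling of its three arguments (visible from (\ref{eqn:J-def-coadjoint}) by counting powers of the scaling parameter in $\Delta_\gog$ and in the Gaussian-type integral). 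The covolume factor $c_\gog$ enters precisely because the Ehrhart leading coefficient is the \emph{relative} volume with respect to the lattice $\Lambda=\mathrm{aff}(H_{\alpha\beta}^\gamma)\cap\Z^{N_r}$, which matches the definition of $\rvol$ adopted in the statement; this is why the proposition is clean for $\rvol$ and only holds for the naive Euclidean volume up to the constant $c_\gog$.

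The main obstacle I anticipate is making the identification of $\phi^{-1}(0)/G$ with the BZ-polytope picture genuinely rigorous rather than heuristic. The BZ polytope lives in a high-dimensional space of partition parameters $\t_a$, and its relation to the reduced space is through a sequence of constructions (Gelfand--Tsetlin-type string parametrizations, or the Berenstein--Kazhdan geometric crystal / toric degeneration of the flag variety) that one must invoke as a black box; verifying that the \emph{symplectic} volume on the geometric side matches the \emph{Ehrhart-relative} volume on the polytope side, with correct lattice normalizations, is where the real work lies. An alternative route that sidesteps symplectic geometry entirely — and which I would present if the quantization argument proves delicate — is purely combinatorial: show directly from (\ref{eqn:J-def-coadjoint}) and the orbital-integral formulas that $\CJ(\alpha,\beta;\gamma)$ equals the $d_r$-dimensional volume of the fiber $\{(\t_a)\in\R^{N_r}_+ : \sum_a \t_a\Ga_a = \sigma,\ \text{BZ inequalities}\}$ by interpreting the $x$-integral in (\ref{eqn:J-def-coadjoint}) as a Fourier-dual description of this constrained volume (the roots $\Ga_a$ playing the role of the linear forms cutting out $\mathrm{Part}_\gog(\sigma)$), and then matching normalizations against the already-established $A_r$ case from \cite{CZ1}.
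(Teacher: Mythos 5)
Your main route is sound in outline but genuinely different from the paper's. You go through symplectic geometry: $\CJ$ as a Duistermaat--Heckman density, hence (up to normalization) the Liouville volume of the reduction $\phi^{-1}(0)/G$; then Riemann--Roch asymptotics and quantization-commutes-with-reduction to identify the reduced space's Hilbert quasi-polynomial with $P_{\lambda\mu}^\nu(s)=C_{s\lambda\,s\mu}^{s\nu}$; then the Berenstein--Zelevinsky theorem to match with the Ehrhart quasi-polynomial of $H_{\lambda\mu}^\nu$; then compare leading coefficients. The paper's proof is instead elementary and avoids quantization entirely: it dilates the $\CJ$--LR identity of Proposition~\ref{prop2} by $s\gg1$, uses homogeneity $\CJ(s\lambda,s\mu;s\nu)=s^{d_r}\CJ(\lambda,\mu;\nu)$ and the normalization (\ref{sumkappa}) to show directly that $\CJ(\lambda,\mu;\nu)$ is the leading coefficient of $P_{\lambda\mu}^\nu(s)$, and invokes McMullen's theorem (plus an elementary surface-area bound) to establish that this leading coefficient is constant --- and hence equal to the relative volume --- whenever $C_{\lambda\mu}^\nu\neq0$, with a final approximation step to drop compatibility. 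Your route buys geometric transparency, but imports heavy machinery ($[Q,R]=0$ on potentially singular or orbifold reductions, Riemann--Roch there) and leaves two issues incompletely addressed. First, you treat ``leading Ehrhart coefficient $=$ relative volume'' as automatic, but for a rational polytope whose affine span contains no lattice point the leading coefficient of $|sP\cap\Z^{N_r}|$ can genuinely be periodic; the paper's Step~1 exists precisely to rule this out using $C_{\lambda\mu}^\nu\neq0$. Second, the $(2\pi)^{N_r}\Delta_\gog(\rho)$ factor relating $\CJ$ to $h_{\alpha\beta}^\gamma(0)$ means ``$\CJ$ equals the symplectic volume of the reduced space'' is only true up to a non-obvious normalization, which must be reconciled with the integrality convention for the prequantum line bundle feeding into Riemann--Roch; this is more than ``bookkeeping'' and is exactly the kind of matching the paper's combinatorial argument sidesteps. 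Your proposed fallback --- reading the $x$-integral in (\ref{eqn:J-def-coadjoint}) as a Fourier-dual expression for the BZ-constrained volume --- is not a viable shortcut: the BZ inequalities are not visible in the orbital-integral kernel, and making that identification would essentially amount to re-proving the BZ theorem.
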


\begin{proof}

 We begin by assuming that we are working with a compatible triple $(\lambda, \mu, \nu)$ with $C_{\lambda \mu}^\nu \neq 0$, though later we will remove this assumption.
The stretching quasi-polynomial of $H_{\lambda \mu}^\nu$ can be written\footnote{For example, by inspection of the BZ inequalities for $B_2$ (see (\ref{BZ-ineq-B2}) below),
it is easy to see that in this case 
$H_{\lambda\mu}^\nu$ may have corners at integer or half-integer points, hence 
$P_{\lambda\mu}^{\nu}(s)$ is a quasi-polynomial of period 2. In fact it is well known that in the $B_r,\, C_r$ and $D_r$  cases, the period of  $P_{\lambda\mu}^{\nu}(s)$ is at most 2 \cite{DLMA}.} 
\begin{equation} \label{eqn:quasipoly-general} P_{\lambda\mu}^{\nu}(s) = \sum_{k=1}^{d} a_k(s) s^k,\end{equation} 
where $d = \dim H_{\lambda \mu}^\nu\le d_r$ and each $a_k$ is a rational-valued periodic function on $\N$.

\subsubsection*{Step 1:}
 Using the assumption that $C_{\lambda \mu}^\nu \neq 0$, it follows from results of McMullen \cite{McMullen} that the leading coefficient of the quasi-polynomial $P_{\lambda \mu}^\nu$ is constant.  At the end of this section we will sketch an intuitive argument for why this must be so, which does not require knowledge of  \cite{McMullen}.  The fact that $a_d$ is constant implies that it equals the $d$-volume of $H_{\lambda \mu}^\nu$, by the following simple observation.

Since $H_{\lambda \mu}^\nu$ is a rational polytope, we can choose $m \ge 1$ such that the $m$-fold dilation $m H_{\lambda \mu}^\nu$ is an integral polytope, whose Ehrhart polynomial is equal to 
$$P_{\lambda\mu}^{\nu}(ms) = a_{d} (ms)^{d} + \sum_{k=1}^{d - 1} a_k(m) (ms)^k.$$ By the standard result that the relative volume
of an integral polytope equals the leading coefficient of its Ehrhart polynomial, we then have $\rvol(mH_{\lambda \mu}^\nu) = a_{d} m^{d}$, and thus 
\be \label{eqn:vol-ad} \rvol(H_{\lambda \mu}^\nu) = a_{d}. \ee

 \subsubsection*{Step 2:} 
We may now use (\ref{eqn:vol-ad}) to identify the function  $\CJ$ with the $d_r$-volume of the BZ polytope. 
Upon dilation of $\lambda,\mu,\nu$ by a factor $s$, relation (\ref{In-LR}) gives
\be \label{eqn:J-scale-preapprox}
\CJ(s \lambda+\rho,s\mu+\rho;s\nu+\rho)   =
\sum_{\kappa\in K\atop \tau } \c_\kappa C_{s\lambda\,s\mu}^\tau C_{\tau\, \kappa}^{s\nu} 
\,.\ee
Since $\gog$ has no $\mathfrak{su}(2)$ summands, we can determine from the Riemann-Lebesgue lemma (see Remarks in sect. \ref{sing-loci}) that $\CJ$ is a continuous function of its three arguments.  For $s\gg1$, we use the continuity and homogeneity of $\CJ$ to approximate the l.h.s. of (\ref{eqn:J-scale-preapprox}) by $ \CJ(s \lambda,s\mu;s\nu)=s^{d_r}  \CJ( \lambda,\mu;\nu)$. On the r.h.s., we observe that for $s$ large, any weight $\tau=s\nu-k$,
where $k$ runs over the set $\mathrm{wt}(\kappa)$ of weights of $V_\kappa$, is dominant and contributes 
$\mult_\kappa(k)$  to  $ C_{\tau\, \kappa}^{s\nu}$,
\begin{equation} \label{eqn:J-sum-approx} \sum_{\kappa\in K\atop \tau } \c_\kappa C_{s\lambda\,s\mu}^\tau C_{\tau\, \kappa}^{s\nu} \  \buildrel{s\gg 1}\over{ = }
\sum_{\kappa\in K\atop k\in  \mathrm{wt}(\kappa)}  \c_\kappa \, \mult_\kappa(k)
\, C_{s\lambda\,s\mu}^{s\nu-k} \ \approx \ \sum_{\kappa\in K } \c_\kappa \dim V_\kappa\, C_{s\lambda\,s\mu}^{s\nu},\end{equation}
where we have approximated  $C_{s\lambda\,s\mu}^{s\nu-k}\approx  C_{s\lambda\,s\mu}^{s\nu}$.  This approximation is justified by the geometric observation that, since the equations and inequalities defining the BZ polytope depend linearly on $(\lambda, \mu, \nu)$, the difference between the number of integral points in $H_{s\lambda\,s\mu}^{s\nu}$ and in $H_{s\lambda\,s\mu}^{s\nu-k}$ must be lower order than $s^{d_r}$ for $s$ large.
Finally  using relation (\ref{sumkappa}), we obtain
$$\sum_{\kappa\in K} \c_\kappa   C_{s\lambda\,s\mu\, \kappa}^{s\nu} 
\approx 
C_{s\lambda\,s\mu}^{s\nu} = P_{\lambda\mu}^\nu(s) =  a_{d_r} 
s^{d_r}+\textrm{ lower-order\ terms,}$$
whence the identification
\be \CJ( \lambda,\mu;\nu) = a_{d_r}  =\rvol(H_{\lambda \mu}^\nu)\,. \ee 

Note that $\CJ(\lambda,\mu;\nu)$ may vanish for a compatible triple, but in this case by the above argument we have $a_{d_r} = 0$, so that these are exactly the cases when $d < d_r$ and the $d_r$-volume of $H_{\lambda \mu}^\nu$ vanishes. 

\subsubsection*{Step 3:}
 We now use a simple approximation technique to remove the assumption of compatibility and show that $\CJ( \alpha,\beta;\gamma) =\rvol(H_{\alpha \beta}^\gamma)$ for arbitrary points $\alpha,\beta,\gamma$ in the dominant chamber. It will be apparent from the proof of Proposition \ref{prop:J-positive-interior} below that if $\gamma \not \in \mathscr{H}_{\alpha \beta}$ then $H_{\alpha \beta}^\gamma$ is the empty set.  Accordingly we may assume $\gamma \in \mathscr{H}_{\alpha \beta}$. For any $\varepsilon > 0$ we can find a compatible triple $(\lambda, \mu, \nu)$ with $C_{\lambda \mu}^\nu \neq 0$ and $N \in \N$ such that $|\alpha - \lambda / N| + |\beta - \mu / N| + |\gamma - \nu / N| < \varepsilon$.  We have $$\CJ(\lambda / N, \mu / N ; \nu / N) = N^{-d_r} \CJ(\lambda, \mu; \nu) = N^{-d_r} \rvol(H_{\lambda \mu}^\nu) = \rvol(H_{\lambda/N \ \mu/N}^{\nu/N}),$$ and $| \rvol(H_{\lambda/N \ \mu/N}^{\nu/N}) - \rvol(H_{\alpha \beta}^{\gamma}) | = O(\varepsilon)$ for $\varepsilon$ small.  Since $\CJ$ is continuous, letting $\varepsilon \to 0$ we obtain $\CJ( \alpha,\beta;\gamma) =\rvol(H_{\alpha \beta}^\gamma)$.
 \end{proof}

\medskip
 We end this subsection by sketching a brief argument that the leading coefficient $a_d$ of $P_{\lambda \mu}^\nu$ must be constant when $(\lambda, \mu, \nu)$ is a compatible triple with $C_{\lambda \mu}^\nu \neq 0$.  For the sake of simplicity we will assume that $P_{\lambda\mu}^{\nu}$ has period 2, so that it has the form \begin{equation} \label{eqn:quasipoly-period2} P_{\lambda\mu}^{\nu}(s) = \sum_{k=1}^{d} (a^+_k + (-1)^s a^-_k)s^k \end{equation} where $a^+_k, a^-_k$ are rational coefficients. However, the discussion easily extends to an arbitrary period.

Since $C_{\lambda \mu}^\nu \neq 0$, $\mathrm{aff}(H_{\lambda \mu}^\nu)$ obviously contains at least one integer point.  Moreover $\mathrm{aff}(H_{\lambda \mu}^\nu)$ is a $d$-dimensional rational affine subspace, so the fact that it contains one integer point implies that it contains a rank $d$ affine sublattice $\Lambda \subset \Z^{N_r}$.  Choose a linear transformation $\psi: \R^{N_r} \to \R^{d}$ that maps $\Lambda$ bijectively to $\Z^{d}$.  For all $s = 1, 2, ...$, the number of integer points in $s H_{\lambda \mu}^\nu$ is equal to the number of integer points in $s \psi(H_{\lambda \mu}^\nu)$, so that these two polytopes have the same Ehrhart quasi-polynomial $P_{\lambda \mu}^\nu(s)$. 
 Thus we have reduced the problem to studying dilations of a {\it full} polytope (that is, a $d$-dimensional polytope in $\R^{d}$ rather than the higher-dimensional space $\R^{N_r}$). 

Now suppose for the sake of contradiction that the leading coefficient of $P_{\lambda \mu}^\nu(s)$ were not a constant, i.e.~$a_d^- \ne 0$ in (\ref{eqn:quasipoly-period2}), and compute the difference 
$P_{\lambda\mu}^\nu(s+1)-P_{\lambda\mu}^\nu(s) = 2a_d^- s^d +\ \textrm {(lower-order terms)}$.
This equals the difference between the numbers of integer points in the two polytopes $(s+1) \psi(H_{\lambda \mu}^\nu)$ and $s\psi(H_{\lambda \mu}^\nu)$. It is easy to see that this number is bounded by
a multiple of the $(d-1)$-dimensional surface area of the larger polytope and is therefore $O(s^{d-1})$, in contradiction with the previous expression.  This proves that the leading coefficient of $P_{\lambda \mu}^\nu(s)$ must be a constant.

 
\subsection{Non-vanishing of $\CJ$ on the interior of the Horn polytope}
\label{sec:interior-nonvanish}
Proposition \ref{prop:J-equals-BZvol} leads to a proof of the following proposition, which generalizes a result that was shown for the $A_r$ case in \cite{CZ1}.
\begin{proposition}
\label{prop:J-positive-interior} For all $\alpha, \beta \in \CC_+$ and $\gamma$ in the interior of $\mathscr{H}_{\alpha \beta}$ (in the topology of $\mathfrak{t}$), $\CJ(\alpha, \beta ; \gamma) > 0$.
\end{proposition}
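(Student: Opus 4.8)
The plan is to use Proposition~\ref{prop:J-equals-BZvol}, which identifies $\CJ(\alpha,\beta;\gamma)$ with the relative $d_r$-volume of the BZ polytope $H_{\alpha\beta}^\gamma$. By that result, it suffices to show that for $\gamma$ in the interior of $\mathscr{H}_{\alpha\beta}$, the polytope $H_{\alpha\beta}^\gamma$ has full dimension $d_r = N_r - r$ (equivalently, nonempty interior inside $\mathrm{aff}(\mathrm{Part}_\gog(\sigma))$), since a convex polytope of dimension $d_r$ has strictly positive $d_r$-volume. So the whole problem reduces to a dimension count for the BZ polytope.

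First I would recall the defining data: $H_{\alpha\beta}^\gamma$ is cut out of $\mathrm{Part}_\gog(\sigma)$, where $\sigma = \alpha+\beta-\gamma$, by a finite list of linear inequalities whose coefficients depend \emph{linearly} on $(\alpha,\beta,\gamma)$ — this linearity is emphasized repeatedly in the excerpt and is the key structural fact. The strategy is then: (i) identify the Horn polytope $\mathscr{H}_{\alpha\beta}$ as precisely the set of $\gamma\in\CC_+$ for which $H_{\alpha\beta}^\gamma\neq\emptyset$ (this is the content of the ``$\gamma\not\in\mathscr{H}_{\alpha\beta}\Rightarrow H_{\alpha\beta}^\gamma=\emptyset$'' remark invoked in Step~3 of the previous proof, together with the known fact that the Horn polytope is the projection of $\{(A,B):A\in\CO_\alpha,B\in\CO_\beta\}$ under $(A,B)\mapsto\mathrm{spec}(A+B)$, i.e. it is $\{\gamma:C_{\lambda\mu}^\nu>0\text{ for scalings}\}$ or equivalently the support of the Horn measure); (ii) use convexity: the map $\gamma\mapsto H_{\alpha\beta}^\gamma$ is a ``linear family'' of polytopes, so the function $\gamma\mapsto\dim H_{\alpha\beta}^\gamma$ is, in an appropriate sense, upper semicontinuous and attains its generic (maximal) value $d_r$ on a dense open subset, while the locus where the dimension drops is a union of proper faces — it is contained in the boundary of $\mathscr{H}_{\alpha\beta}$; (iii) conclude that on the interior of $\mathscr{H}_{\alpha\beta}$ we have $\dim H_{\alpha\beta}^\gamma = d_r$, hence $\rvol>0$, hence $\CJ>0$.

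To make step (ii) precise I would argue as follows. Pick $\gamma_0$ in the interior of $\mathscr{H}_{\alpha\beta}$. Because $\gamma_0$ is interior and $\mathscr{H}_{\alpha\beta}$ is full-dimensional in $\CC_+$ (this holds under our standing assumption that $\alpha,\beta$ are regular, as stated in the introduction, giving $\dim\mathscr{H}_{\alpha\beta}=r$), there is a small simplex of points $\gamma_1,\dots,\gamma_{r+1}\in\mathscr{H}_{\alpha\beta}$ whose convex hull contains $\gamma_0$ in its interior and for each of which $H_{\alpha\beta}^{\gamma_j}\neq\emptyset$; choose $t_j\in H_{\alpha\beta}^{\gamma_j}$. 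Since the constraints are linear in $\gamma$ and the polytope is convex, any convex combination $\sum c_j t_j$ (with $\sum c_j = 1$, $c_j\ge0$) lies in $H_{\alpha\beta}^{\gamma}$ where $\gamma = \sum c_j\gamma_j$. As the $\gamma_j$ affinely span a neighborhood of $\gamma_0$ in the $r$-dimensional chamber and $\mathrm{Part}_\gog(\sigma)$ itself has dimension $N_r - r = d_r$, I expect to show that the union $\bigcup_{\gamma\text{ near }\gamma_0} H_{\alpha\beta}^\gamma$ has dimension $d_r + r = N_r$ in $\R^{N_r}$ (it surjects, via the linear map recording $\sigma$, onto an $r$-dimensional neighborhood, and each fiber is $H_{\alpha\beta}^\gamma$), and that this full-dimensionality of the union, combined with the fact that the fiber dimension can only drop on a proper algebraic subset, forces $\dim H_{\alpha\beta}^{\gamma_0} = d_r$; alternatively, a cleaner route is to show directly that $H_{\alpha\beta}^{\gamma_0}$ contains a relatively open subset of $\mathrm{aff}(\mathrm{Part}_\gog(\sigma))$ by producing an interior point, namely by perturbing a convex combination $\sum c_j t_j$ in directions along $\ker(\text{the weight map})$ while staying strictly inside all the defining inequalities — strictness being available precisely because $\gamma_0$ is interior and each inequality depends continuously (linearly) on $\gamma$.

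\textbf{Main obstacle.} The crux is step (ii): proving that the dimension of $H_{\alpha\beta}^\gamma$ does not drop below $d_r$ anywhere in the \emph{interior} of the Horn polytope. Nonemptiness alone (step (i)) is comparatively easy, but a convex polytope cut out by linearly-varying inequalities can perfectly well drop dimension at interior parameter values in a general linear family — the reason it cannot happen here must come from the specific combinatorial structure of the BZ inequalities (or, equivalently, from the fact that the Horn measure, whose density is $\CJ$ up to the nonzero prefactor, is genuinely $r$-dimensional with full support equal to $\mathscr{H}_{\alpha\beta}$). I would therefore want to cross-check the dimension-count argument against the alternative, more analytic route: $\CJ$ is (a positive constant times) the Radon--Nikodym density of the pushforward of a smooth positive measure on the compact manifold $\CO_\alpha\times\CO_\beta$ under the eigenvalue map, and at any point $\gamma_0$ strictly inside the image where $\gamma_0$ is moreover a regular value of that map, formula~(\ref{eqn:pdf-fiber-int}) exhibits the density as an integral of a strictly positive integrand over a nonempty fiber, hence strictly positive — and one then extends from regular values to all interior points by the continuity of $\CJ$ (valid under the no-$\mathfrak{su}(2)$-summand hypothesis) together with the fact that regular values are dense. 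I expect the paper's actual proof to run through the BZ-polytope dimension count, but I would present the fiber-integral argument as the conceptual heart of why positivity holds.
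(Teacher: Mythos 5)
Your overall strategy coincides with the paper's: invoke Proposition~\ref{prop:J-equals-BZvol} to reduce the claim to showing that $H_{\alpha\beta}^\gamma$ has full dimension $d_r$ for every $\gamma$ in the interior of $\mathscr{H}_{\alpha\beta}$, and you correctly identify your step~(ii) --- that this dimension cannot drop at an interior parameter value --- as the crux. You also rightly note that this does not follow from the bare linearity of the constraint family, and that it is exactly the difficult point. However, neither of your two sketched resolutions of step~(ii) actually closes.

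The convex-combination route fails at the strictness claim. If $t_0 = \sum_j c_j t_j$ with $\gamma_0 = \sum_j c_j\gamma_j$, then each BZ inequality gives only $v(t_0) \ge h(\gamma_0)$; if every $t_j$ happens to saturate a given inequality (and nothing prevents this --- the $t_j$ are arbitrary points of $H_{\alpha\beta}^{\gamma_j}$, possibly on the boundary), then $t_0$ saturates it too. The phrase ``strictness being available precisely because $\gamma_0$ is interior'' is not justified: interiority of $\gamma_0$ in $\mathscr{H}_{\alpha\beta}$ does not by itself produce a point of $H_{\alpha\beta}^{\gamma_0}$ with all inequalities strict.

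The paper resolves this by a device your ``union of fibers'' idea is circling around but does not quite reach. One rewrites the defining equations and inequalities with $\gamma$ promoted to an independent variable, obtaining a \emph{single} convex polytope $U_{\alpha\beta}\subset\R^{N_r+r}$ whose fibers under the linear projection onto the $\gamma$-coordinates are exactly the $H_{\alpha\beta}^\gamma$. The projection restricted to $\mathrm{relint}(U_{\alpha\beta})$ is an affine linear map whose image is the open set $\mathrm{int}(\mathscr{H}_{\alpha\beta})$; since this image is $r$-dimensional, the affine map has rank $r$, hence is a submersion with constant rank everywhere, hence has fibers of constant dimension. The fibers are the relative interiors $\mathrm{relint}(H_{\alpha\beta}^\gamma)$, and since (as you note) some interior $\gamma_\star$ has $\dim H_{\alpha\beta}^{\gamma_\star}=d_r$, so does every interior $\gamma$. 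The structural fact your argument does not exploit, and which makes the paper's proof work, is that an affine linear map has \emph{constant} rank --- so the phenomenon you flagged (fiber dimension dropping in a ``general linear family'') cannot occur once the family is packaged as a projection from a single convex body.

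Your proposed analytic alternative has an analogous gap. Positivity of $\CJ$ at regular values of the eigenvalue map, plus continuity of $\CJ$, only yields $\CJ \ge 0$ at the remaining interior points: a nonnegative continuous function can vanish at a point while being positive on a dense set around it. To upgrade to strict positivity at irregular interior points you would need some uniform local lower bound on the fiber integrals, which is morally the same problem as the polytope dimension count, only harder to control.
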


\begin{proof}
We may assume that $\dim \mathscr{H}_{\alpha \beta} = r$; otherwise its interior is empty.  It follows from Proposition \ref{prop:J-equals-BZvol} that $\CJ(\alpha, \beta ; \gamma) > 0$ exactly when $\dim H_{\alpha \beta}^{\gamma} = d_r$.  There is at least one point $\gamma_\star$ in the interior such that $\dim H_{\alpha \beta}^{\gamma_\star} = d_r$, since $\CJ$ is locally a polynomial of degree $d_r > 0$ and cannot vanish everywhere. It thus suffices to show that $d(\gamma) := \dim H_{\alpha \beta}^\gamma$ is constant on the interior of $\mathscr{H}_{\alpha \beta}$.

The BZ polytope $H_{\alpha \beta}^\gamma$ is the simultaneous solution set of the equation 
\begin{equation} \label{eqn:BZ-eq} \sum_{a=1}^{N_r} \t_a \Ga_a = \alpha + \beta - \gamma \end{equation} 
and a collection of $m$ linear inequalities that can all be written in the form 
\begin{equation} \label{eqn: BZ-ineqs-generic} v_j(\t_1, \hdots, \t_{N_r}) \ge h_j(\alpha, \beta, \gamma), \quad j = 1, \hdots, m,\end{equation} 
where each $v_j$ is a linear functional on $\R^{N_r}$ and $h_j$ is a linear functional on $\R^{3r}$.  Moving all terms involving $\gamma$ to the left-hand side, we can rewrite (\ref{eqn:BZ-eq}) and (\ref{eqn: BZ-ineqs-generic}) as 
\begin{align}  \label{eqn:univ-BZ1}  & \sum_{a=1}^{N_r} \t_a \Ga_a + \gamma = \alpha + \beta, \\ \label{eqn:univ-BZ2} & v_j'(\t_1, \hdots, \t_{N_r}, \gamma) \ge h_j'(\alpha, \beta), \quad j = 1, \hdots, m, & \end{align} where each $v_j'$ is a linear functional on $\R^{N_r + r}$ and $h_j'$ is a linear functional on $\R^{2r}$.  Considering $\gamma$ as an independent variable, the simultaneous solution set of (\ref{eqn:univ-BZ1}) and (\ref{eqn:univ-BZ2}) is a polytope in $\R^{N_r + r}$, which we denote $U_{\alpha \beta}$.  The projection $\R^{N_r + r} \to \R^r$ onto the $\gamma$ coordinates maps the relative interior of $U_{\alpha \beta}$ onto the interior of $\mathscr{H}_{\alpha \beta}$.  Let $P$ be this map on the interiors.  Each fiber $P^{-1}(\gamma)$ is in turn the relative interior of $H_{\alpha \beta}^\gamma$, so that $\dim P^{-1}(\gamma) = \dim H_{\alpha \beta}^\gamma$.  Moreover the map $P$ is a global submersion, {so that} its fibers all have the same dimension (see e.g. \cite{Lee} ch. 7). This completes the proof.
\end{proof}


\subsection{Geometric origin and nature of the non-analyticities of $\CJ$}
\label{nature-sing-SUn}

We now explain how the non-analyticities of $\CJ$ can be understood in terms of the geometry of $H_{\alpha \beta}^\gamma.$  The facets of $H_{\alpha \beta}^\gamma$ are cut out by some number of hyperplanes in $\mathbb{R}^{N_r}$ corresponding to the BZ inequalities.  For fixed $\alpha$ and $\beta$, these hyperplanes undergo linear translations when $\gamma$ varies in the dominant chamber.  As $\gamma$ varies, an inequality may become redundant, meaning that the corresponding hyperplane no longer intersects the polytope, so that the polytope has one fewer facet;  or alternatively, a previously redundant inequality may become relevant, meaning that the corresponding hyperplane intersects the polytope, forming a new facet. 
Non-analyticities of the volume $\CJ$ as a function of $\gamma$ occur at such points, where one of the hyperplanes defined by the BZ inequalities hits the polytope $H_{\alpha \beta}^\gamma$, or conversely does not intersect it anymore.

Since we know that $\CJ$ is a piecewise polynomial function of $\gamma$, these non-analyticities take the form of a change of polynomial determination (\ie the local polynomial form of $\CJ$). 
For a point $\gamma$ at a distance $\varepsilon$ from a non-analyticity hyperplane (see Proposition 1), the change of determination is of
the form $\Delta \CJ= O(\varepsilon^m)$, which is the volume of the piece of the polytope chopped off by the incident hyperplane.
In the neighbourhood of that non-analyticity hyperplane, the function is thus of differentiability class $C^{m-1}$. 
  Obviously the integer $m$ is bounded from above by $d_r$, but also from below  due to known lower bounds on the number of continuous derivatives of $\CJ$ (see the Remarks in sect. \ref{sing-loci}). One finds $r-1\le m \le d_r=\oh r(r-1)$
  for the $A_r$ cases, and  $ 2(r-1) \le m\le d_r=r(r-1)$ for $B_r$. For example in the case of SU(4), an explicit (unpublished) calculation
  with $\alpha=\beta=(3,2,1,0)$
  has revealed non-analyticities of class $C^{1}$ and $C^2$.  See Fig. \ref{sing-vol-fig} for illustration,
where the right-most case is in fact prohibited by the above bound on $m$. One may convince oneself that this bound on $m$ guarantees that any non-analyticity of the volume of the polytope must involve  the appearance or disappearance of a facet, and not merely a rearrangement of lower-dimensional faces.

  \begin{figure}[htb]
  \centering
       \includegraphics[width=40pc]{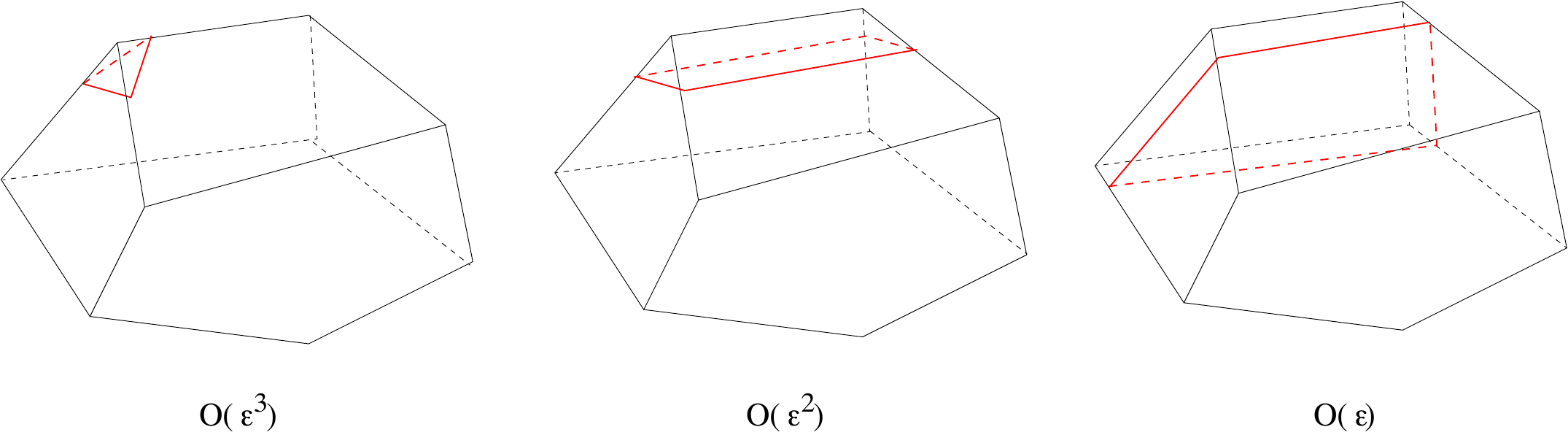}\caption{\label{sing-vol-fig} Artist's view of the intersection of 
       a $(d-1)$-hyperplane with the polytope  $H_{\alpha \beta}^\gamma$, here in $d=3$ dimensions, chopping off 
       a volume of order $O(\varepsilon^3)$, $O(\varepsilon^2)$ or $O(\varepsilon)$.
       We argue in the text that the right-most case cannot occur. }
       \end{figure}

 \section{The case of $B_2$}
 \label{sectB2}
 In this section, we illustrate the previous considerations in the case of $B_2=\mathfrak{so}(5)$.
 
 \def\smat#1{\mbox{\footnotesize{\mbox{$\begin{pmatrix}#1\end{pmatrix}$}}}}
 \subsection{The function $\CJ$ for $\SO(5)$}
 \label{CJ2-SO5}  
 Consider two {\it skew--}symmetric $5\times 5$ real matrices, in the block diagonal form $A=
\diag\bigg( {\footnotesize\begin{bmatrix}0&\alpha_i\\ -\alpha_i&0 \end{bmatrix}}_{i=1,2}, 0\bigg)$
and likewise for $B$.
 For $\SO(5)$ orbits of  these matrices,    the function $\CJ$ has been written in \cite{Z1} in the form 
\bea\nonumber
\CJ(\alpha,\beta;\gamma)&=&\frac{1}{64 \pi^2} \int \frac{ds\, dt }{s t (t^2-s^2)} \left[ \sin  s(\alpha_1+\alpha_2) \sin t(\alpha_1-\alpha_2)
-\sin  s(\alpha_1-\alpha_2) \sin t(\alpha_1+\alpha_2) \right]\\
&& \qquad \qquad\qquad \qquad \qquad \qquad \times \left[ \mathrm{same\ with}\ \beta\right]\left[ \mathrm{same\ with}\ \gamma\right]\,,\eea
which may then be written explicitly as a degree 2 piecewise polynomial. (Here the PDF, normalized on 
 the Horn polytope $\mathscr{H}_{\alpha\beta}$, is given by $\frac{3}{2} 
 \frac{|\Delta_O(\gamma)|}{|\Delta_O(\alpha)|\,|\Delta_O(\beta)|}\CJ$.)
Recall that the $B_2$ Weyl group $W=\mathrm{Dih}(4)=S_2\ltimes (\Z_2\times \Z_2) $  acts on the 2-vector $(\alpha_1,\alpha_2)$ 
by a change of sign of either component, 
or by swapping them. Denote the action of $(w, w', w'') \in W^3$ on the vector $\alpha+\beta-\gamma$ by
  $\sigma :=w(\alpha)+w'(\beta)-w''(\gamma)$,  and thus $\sigma_i=w(\alpha)_i+w'(\beta)_i-w''(\gamma)_i$, $i=1,2$. Let $\epsilon(w)$ denote the sign of a Weyl group element. Then
 \be\label{CJ2expl-alt} \CJ= \inv{2^8} \sum_{w,w',w''\in W} \epsilon(w) \epsilon(w') \epsilon(w'') \Big(4\sigma_1 |\sigma_1| -4 \sigma_2 |\sigma_2|   -2(\sigma_1-\sigma_2) |\sigma_1-\sigma_2|  \Big)\,  \sign(\sigma_1+\sigma_2)\,. \ee
 (Note that one may get rid of one of the three summations over the Weyl group, fixing one of the $w$'s to the identity and
multiplying the result by a factor $|W|=8$, which simplifies greatly the actual computation.)

It follows from this expression that $\CJ$  is of differentiability
class $C^{1}$ \footnote{There is an unfortunate misprint in sect. 5 of \cite{Z1}: the function $\CJ$ of $B_r=\mathfrak{so}(2r+1)$ is of class
$C^{2r-3}$, again as a consequence of the Riemann--Lebesgue lemma.}. 
Recall that one may always assume that $\alpha_1 > \alpha_2 > 0,\ \beta_1 > \beta_2> 0$ 
and $\gamma_1\ge \gamma_2\ge 0$. 
The support of $\CJ$ is determined by generalized  Horn inequalities of $B_2$ type. To write them down, 
we note that if $A$ is  a skew--symmetric  real matrix, then $\ii A$ is a complex Hermitian matrix. Thus the inequalities of
$B_2$ type 
 follow from the classical Horn's inequalities  \cite{Ho, Fu} of $A_4$ type (\ie for  $5\times 5$ complex Hermitian matrices), 
applied to matrices of the form $\mathrm{diag}(\alpha_1,\alpha_2,0,-\alpha_2,-\alpha_1)$, and likewise for $\beta$ and $\gamma$. One finds 
\bea\label{Horn-ineq-B2} \max( |\alpha_1-\beta_1|, |\alpha_2-\beta_2|) \le & \gamma_1 &\le \alpha_1+\beta_1\\
  \max(0,  \alpha_2-\beta_1, -\alpha_1+\beta_2)  \le & \gamma_2 &\le  \min( \alpha_1+\beta_2,\alpha_2+\beta_1)  \\
 |\alpha_1-\beta_1| +|\alpha_2-\beta_2|   \le & \gamma_1 +\gamma_2&\le \alpha_1+\alpha_2+\beta_1 +\beta_2 \\
   \max( 0, \alpha_1-\alpha_2-\beta_1-\beta_2, \beta_1-\beta_2-\alpha_1-\alpha_2)  \le & \gamma_1 -\gamma_2&\le \alpha_1+\beta_1-|\alpha_2-\beta_2| \,.
   \eea
 These inequalities, supplemented by $\gamma_1\ge \gamma_2\ge 0$, define the ($B_2$--generalized) Horn polygon, see Fig. \ref{twocases} for examples. 
     
The  singular lines of $\CJ$ may be determined by the same kind of argument as in sect. \ref{sing-loci}, or, following the same reasoning as above, as a special case of the singular lines of $A_4$ type.
Thus the {\it possible} lines of non-$C^2$ differentiability are those among 
 \bea  \nonumber
\gamma_1= \gamma_{1s}&\in&\{\alpha_1+\beta_2\,,\quad \alpha_2+\beta_1\,, \quad\alpha_2+\beta_2\,,\quad    |\alpha_1-\beta_2|\,, \quad |\alpha_2-\beta_1|\}\,,\quad   \\  \nonumber
  \gamma_2 = \gamma_{2s}&\in&\{\alpha_2+\beta_2\,,\quad   |\alpha_1-\beta_2|\,, \quad |\alpha_2-\beta_1|\,,\quad |\alpha_2-\beta_2|\,, \quad |\alpha_1-\beta_1|\}\ \\
  \label{sing-lines} 
  \gamma_1 + \gamma_2 = \gamma_{1+2\,s}&\in&\{ \alpha_1+\alpha_2+\beta_1-\beta_2 \,, \    | \alpha_1+\alpha_2-\beta_1+\beta_2| \,, \  \alpha_1-\alpha_2+\beta_1+\beta_2 \,, \\ 
  \nonumber && \qquad \qquad \qquad \qquad \qquad \qquad \   | -\alpha_1+\alpha_2+\beta_1+\beta_2|\,,\  \alpha_1-\alpha_2+\beta_1-\beta_2 \} \\
\nonumber \gamma_1-   \gamma_2 = \gamma_{1-2\,s}&\in&\{
  |- \alpha_1+\alpha_2+\beta_1+\beta_2|\,,\  | \alpha_1+\alpha_2-\beta_1+\beta_2| \,, \    \alpha_1-\alpha_2+\beta_1-\beta_2 \,, \
  \\ 
 \nonumber  && \qquad \qquad \qquad \qquad \qquad \qquad    | \alpha_1-\alpha_2-\beta_1+\beta_2| \,, \   | \alpha_1+\alpha_2-\beta_1-\beta_2| \}
\eea
that intersect the Horn polygon.\\

 { One may find an explicit piecewise polynomial representation of the volume function $\CJ$. At the price
  of swapping $\alpha$ and $\beta$, one may always assume that 
  $$ |\beta_1 -\alpha_2| \ge |\alpha_1-\beta_2|\,.$$
  The lines (\ref{sing-lines}) have 4-fold  intersections at four  vertices, denoted $I,J,K,L$, with coordinates
 $$ I=(\alpha_1 + \beta_2, |\alpha_2 - \beta_1|); \ J= (\alpha_2 + \beta_1,  |\alpha_1 - \beta_2|);\
K=(| \beta_1-\alpha_2 |, |\alpha_1 - \beta_2|); \  L= (\alpha_2 + \beta_2, |\alpha_1 - \beta_1|),$$
 some of which may be outside the Horn  polygon.

 \begin{figure}[bht]
  \centering
       \includegraphics[width=18pc]{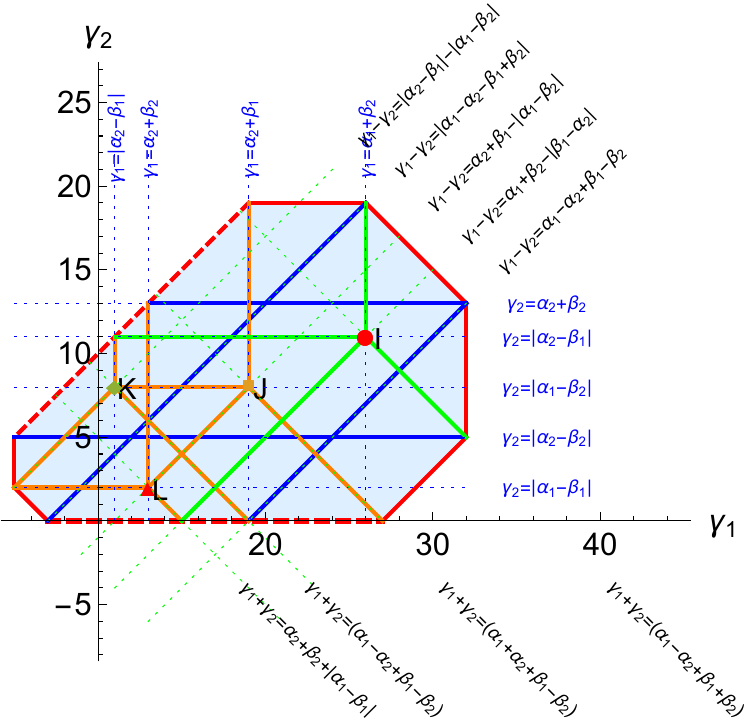} \includegraphics[width=18pc]{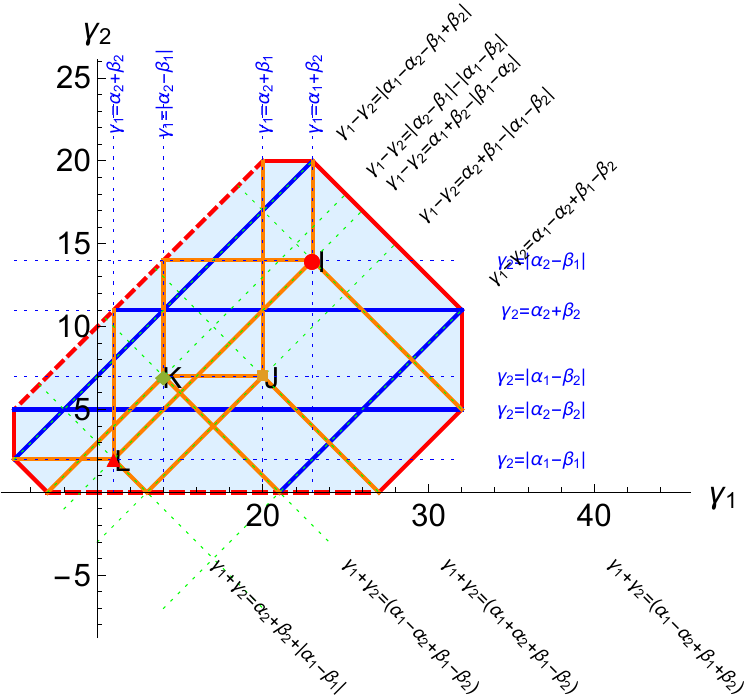}  
       \caption{\label{twocases}  Two examples of the Horn polytope and of the singular lines for
       $\alpha=(17,4),\, \beta=(15,9)$ and for $\alpha=({15,3}),\, \beta=({17,8})$.
       The function $\CJ$ has a quadratic change of determination across the solid lines, and a linear change 
       across the two dashed edges of the Horn polytope. }
       \end{figure}
  
Then two cases arise, depending on whether $\alpha_1\ge \beta_1$ or $\beta_1\ge \alpha_1$. 
In the latter case, and only then, $I$ and $L$ belong to the same diagonal $\gamma_1-\gamma_2= 
\mathrm{constant}$. A  fairly generic example of
each case is depicted in Fig. \ref{twocases}. (Some of the lines might merge or disappear from the figure for
other values of $\alpha$ or $\beta$.) 
The heavy solid lines shown in the figure are the loci of singularities, where  the polynomial determination changes. Note that these lines join at ``four-prong vertices", either inside the polygon (Fig. \ref{four-prong}.a), at some of
the points $I,J,K$ or $L$, or at its boundary (Fig. \ref{four-prong}.b).

Rather than giving a detailed polynomial expression in each sector arising from this decomposition, it is simpler to give an empirically observed set of rules that determine the {\it change} of polynomial determination.  These rules are as follows.

Starting from the exterior of the polygon, where $\CJ$ vanishes, enter the polygon through any of the solid red lines.
Crossing any of the solid lines along the direction of the arrow increases $\CJ$ by $\oh \Delta^2$, 
where $\Delta=(\gamma_i -\gamma_{is})$ for a vertical or horizontal line of equation $\gamma_i=\gamma_{is}$, and 
 $\Delta=\inv{\sqrt{2}}(\gamma_1\pm \gamma_2 -\gamma_{1\pm2\, s})$ for an oblique line of equation $\gamma_1\pm \gamma_2 =\gamma_{1\pm2\, s}$. 
 The arrows are shown  in Fig. \ref{four-prong}.
 When two lines merge, the differences $\oh \Delta^2$ add up.
 
 Now the reader will verify that this prescription is consistent: \\
-- Following a closed loop around a four-prong vertex, see Fig. \ref{four-prong}, we return to our initial expression for $\CJ$ (ensuring that the rules actually give a well-defined function), thanks to the trivial identity
$$ x^2+y^2 = \Big(\frac{x+y}{\sqrt{2}}\Big)^2+  \Big(\frac{x-y}{\sqrt{2}}\Big)^2 $$
where $x=\gamma_1-\gamma_{1s}$ and $y=\gamma_2-\gamma_{2s}$. \\
--  By considering a path that crosses the polygon, this implies in particular that} $\CJ$ returns to 0  outside the polygon.

Also note that along the edges $\gamma_1-\gamma_2=0$ or $\gamma_2=0$ of the polygon, which are {\it not} 
dictated by Horn's inequalities but rather reflect our choice to work in the dominant Weyl chamber, $\CJ$ may vanish only linearly. This is why those
lines have been  depicted by dashed lines in Fig. \ref{twocases}, and is also why no prescription is given for the crossing of those dashed lines. 

 At this stage, this piecewise polynomial construction of $\CJ$ is just a conjecture that has been checked on many examples.
In principle, establishing it should follow from a careful examination of eq. (\ref{CJ2expl-alt}) and of its possible changes of determination across the singular lines given in (\ref{sing-lines}). 
 Obtaining this construction of $\CJ$ from the complicated expression (\ref{CJ2expl-alt}) has {so far} resisted our attempts.

        \begin{figure}[htb]
  \centering
       \includegraphics[width=20pc]{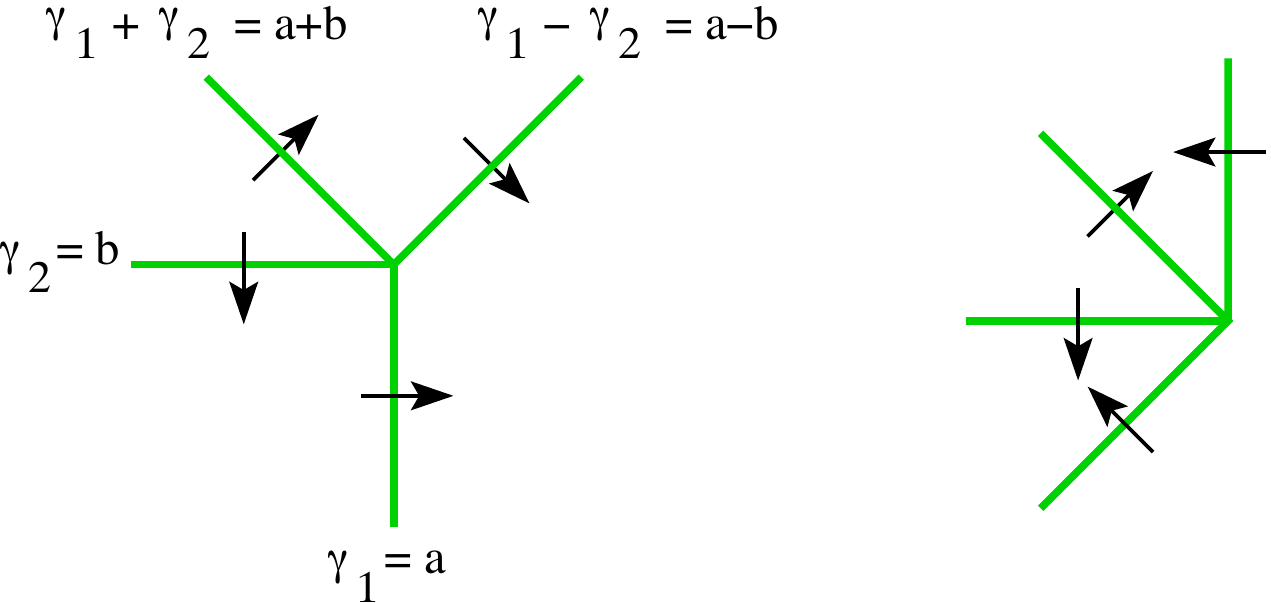}  \caption{\label{four-prong} Prescriptions of changes of $\CJ$ across the lines
       emanating from a four-prong vertex. This holds for any rotated configuration of those two types. }  \end{figure}
       
   
  \subsection{Geometry of the $B_2$ root space. Possible vanishing of $C_{\lambda\,\mu}^\nu$  and saturation property}
  \label{sec:rootspace}
   Recall that if $e_i$, $i=1,2$, are two orthonormal vectors, the two simple roots may be written 
  as $\Ga_1=e_1-e_2,\ \Ga_2=e_2$, while the fundamental weights are 
  $\omega_1=e_1=\Ga_1+\Ga_2,\ \omega_2=\oh(e_1+e_2)$.  See Fig. \ref{B2rootsystem}. 
  In what follows, weights will usually be specified by their coordinates in the $(\omega_1,\omega_2)$ basis
  (``Dynkin labels"). Occasionally we'll have to use the simple root basis (``Kac indices") or the $e_i$ basis. 
  
   \begin{figure}[htb]
  \centering
       \includegraphics[width=10pc]{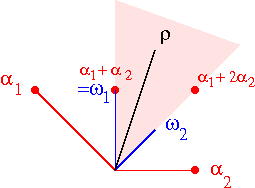}\caption{\label{B2rootsystem} Roots and weights of $B_2$: the positive roots in red, 
       the two fundamental weights in blue, and the Weyl vector in black. The shaded octant is the first (dominant) Weyl chamber.}
       \end{figure}

\bigskip       In the case of $B_2$, the compatibility condition
        $\sigma: =\lambda+\mu-\nu\in Q$ amounts to $\sigma_2=0\mod 2$.
          As an example, for $\lambda=\mu=\nu=\omega_2$ (the spinorial representation), $\sigma=\omega_2= \Ga_1/2 + \Ga_2 \notin Q$, and $C_{\lambda\,\mu}^\nu=0$ as this is a non-compatible triple. 
                        {By way of contrast, if  $\lambda=\mu=\nu=\omega_1$ (the vectorial representation),  the triple is compatible since 
           $\sigma=\omega_1 = \Ga_1+ \Ga_2 \in Q$,  but nevertheless we still have $C_{\lambda\,\mu}^\nu=0$. 
                Thus in both cases $C_{\lambda\,\mu}^\nu =0$, {\rm while} $C_{2\lambda\,2\mu}^{2\nu}=1\,.$

\bigskip
A semisimple algebra $\gog$ is said to satisfy the {\it saturation property} if $C_{N\lambda\, N\mu}^{N\nu}\ne 0$ implies $ C_{\lambda\, \mu}^{\nu}\ne 0$ whenever $N \in \N$ and $(\lambda, \mu, \nu)$ is a compatible triple.  The saturation property is proved for $A_n$ but fails for $B_n$, in particular for $B_2$,  as the example just given shows.\\

 \subsection{Berenstein--Zelevinsky parameters}
 \label{sec:BZ parameters}
 For the convenience of the reader, we reproduce here the result of Theorem 2.4 of \cite{BZ3}, specialized to the case of $B_2$.
  Introduce four real parameters $ t_0^{(0)},t_{-1}^{(1)}, t_0^{(1)},  t_1^{(1)}$, that express $\sigma=\lambda+\mu-\nu$
  in terms of the positive roots:
 $$ \sigma= ( t_{-1}^{(1)}- t_0^{(1)}+2 t_1^{(1)}) \Ga_1 +  t_0^{(0)} \Ga_2 +(  t_0^{(1)} -2 t_1^{(1)}) (\Ga_1+\Ga_2)
 +  t_1^{(1)} (\Ga_1+2\Ga_2) \,. $$
 Note that these parameters are linear combinations of the $\t_a$ associated to the (overcomplete) family of positive roots in (\ref{part-poly}).
 The $t$'s are   subject to the conditions
  \bea\nonumber 
  2 t_{-1}^{(1)} \ge  t_0^{(1)} \ge 2  t_1^{(1)} \ge 0 \quad&{\rm and }& \quad   t_0^{(0)}\ge 0\\
 \label{BZ-ineq-B2} 
  \sigma=\lambda+\mu-\nu &=&( t_1^{(1)}+ t_{-1}^{(1)} ) \Ga_1 +   (t_0^{(0)} + t_0^{(1)} )\Ga_2\\ 
\nonumber
  \lambda_1 \ge \max( t_1^{(1)},  t_0^{(1)}- t_{-1}^{(1)} ,  t_{-1}^{(1)}- t_0^{(0)})\ &{\rm and }&\    \lambda_2 \ge  t_0^{(0)}  \\
 \nonumber
   \mu_1 \ge \max(t_{-1}^{(1)}  +2 t_1^{(1)}-t_0^{(1)} , t_1^{(1)})\ &{\rm and }&\ 
		  \mu_2 \ge \max( t_0^{(0)}+2(  t_0^{(1)}- t_{-1}^{(1)}- t_1^{(1)})  ,  t_0^{(1)}-2 t_1^{(1)})\,. 
  \eea
  Then $C_{\lambda\,\mu}^{\nu}$ is equal to the number of integral solutions of (\ref{BZ-ineq-B2}).
 
  
  \subsection{Stretching. Parameter polytope}
 Under stretching, $P_{\lambda\,\mu}^\nu(s):=C_{s\lambda\,s\mu}^{s\nu}$  is in general
 not a polynomial of $s$ but a {\it quasi-polynomial}.\\
 For example for $\lambda=(5,6),\ \mu=(3,4),\ \nu=(5,6)$, 
 we find 
\be\label{Pol563456}P_{\lambda\,\mu}^\nu(s)= 6 s^2 +\frac{7}{2} s +\inv{4}(3+(-1)^s)\,. \ee
The 2-dimensional parameter polytope defined by (\ref{BZ-ineq-B2})  
 is therefore in general {\it not} an integral polytope.
 
 In the previous example where $\lambda=(5,6),\ \mu=(3,4),\ \nu=(5,6)$, 
 after eliminating\footnote{\label{refnote}We must make sure in the final counting however that the eliminated parameters are also integers.} the parameters $t_{-1}^{(1)}$ and $t_0^{(1)}$ through the 2 equalities 
 $\lambda+\mu-\nu=\cdots$, we find the polytope (here a polygon) in the $(t_0^{(0)},t_1^{(1)})$ plane
 defined by the inequalities 
 $$ \{0\leq t_0^{(0)}<6\land 0\leq t_1^{(1)}\leq 3\land t_0^{(0)}+3\geq 2 t_1^{(1)}\land
   3\leq t_0^{(0)}+2 t_1^{(1)}\leq 7\land t_0^{(0)}+t_1^{(1)}\leq 5\}$$
 depicted in Fig. \ref{ParamPoly}(a). It has $C_{\lambda \mu}^\nu =10$ integral points but 
 its corners are not integral,  and  its Ehrhart quasi-polynomial is
 given in (\ref{Pol563456}). Note that its  (relative) 
 ``volume,'' here an area, is $V=6$, in accordance with  the computation 
 of $\CJ$.
   
 In contrast,  for $\lambda=(5,6),\ \mu=(3,4),\ \nu=(6,4)$, we find $C_{\lambda\,\mu}^\nu =10$, $V=11/2$,
  \be\label{Pol563464}P_{\lambda\,\mu}^\nu(s)= \frac{11}{2} s^2 +\frac{7}{2} s +1 \,, \ee
 and the parameter polygon defined by
 $$ \{2\leq t_0^{(0)}\leq 6\land 0\leq t_1^{(1)}\leq 3\land t_0^{(0)}+2\geq 2
   t_1^{(1)}\land 4\leq t_0^{(0)}+2 t_1^{(1)}\leq 8\land t_0^{(0)}+t_1^{(1)}\leq 6\}$$
  is shown in Fig. \ref{ParamPoly}(b).  Note that the polygon is again not integral, although its Ehrhart quasi-polynomial is the genuine polynomial (\ref{Pol563464}). 
  
There are also cases where the polygon is integral.  For example, still with  $\lambda=(5,6),\ \mu=(3,4)$, and now
 $\nu=(2,10)$, we get $C_{\lambda\,\mu}^\nu =8$  and
  \be\label{Pol5634210}P_{\lambda\,\mu}^\nu(s)= \frac{7}{2} s^2 +\frac{7}{2} s +1 \,, \ee
 see Fig. \ref{ParamPoly}(c).
 
Note that in the three previous cases, $P_{\lambda\,\mu}^\nu(-1)$ gives the number of internal points, as it should \cite{Stanley}:
 respectively, 3, 3 and 1.

\bigskip

 In order to make a few simple comments of a geometrical nature, we assume in the rest of this subsection that 
 the sub-leading coefficient of $P(s)$ is constant.  We have found no counter-example of this property for $B_2$ BZ polygons, but we do not offer a proof.}
{In other words we assume that (non-degenerate) $B_2$ stretching polynomials read  $P (s) = V s^2 + L/2 s + (p + (-1)^s (1-p))$ where $V$ is the  (relative) area of the parameter polygon, $L$ is the (relative) length of its boundary and $p$ is some scalar. For a genuine polynomial --- \ie not quasi --- one has $p = 1$.
The polytope being closed and convex we know that $P(0)=1$. Let $C:= C_{\lambda \mu}^{\nu} = P(1)$ be the LR coefficient, i.e. the total number of integer points of the polytope.  Then $i := P(-1)$ is the number of interior points (by Ehrhart--Macdonald reciprocity), and $b := C-i$ is the number of integer points belonging to the boundary. Evaluation of $P$ at $+1$ and $-1$ gives $C + i = 2(V + (2p-1))$ and $C - i = L$; together these two equations imply 
 \be\label{PickDeviation}
 2C -2V - L = 2 (2p-1).
 \ee
Moreover the two relations $C = b + i$ and $C=L+i$ imply $L=b$. All these relations can be checked in the three examples above (see Fig. \ref{ParamPoly}).

Notice that the polytope is integral  only if $p=1$, i.e. if  $2 C - 2 V - L = 2$; this is what happens in the third example above, where $C=8$.
In general, $p$ may be read off from (\ref{PickDeviation}). 
 For instance in the first example  where $V=6$, $L=7$ and  $C=10$, we find $2 C - 2 V - L = 1$ and $p=3/4$, in agreement with the expression of the stretching polynomial.  When the polytope is integral, Pick's theorem, written $V = i + b/2 -1$, applies, and this is of course {equivalent to (\ref{PickDeviation}) with $p = 1$.

 \begin{figure}[htb]
  \centering
       \includegraphics[width=10pc]{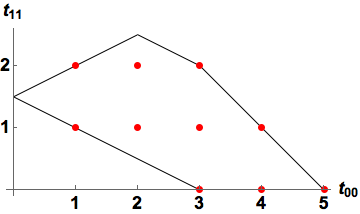}\qquad  \includegraphics[width=10pc]{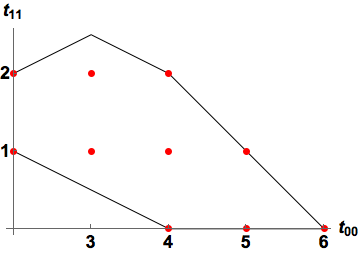}
      \qquad  \includegraphics[width=10pc]{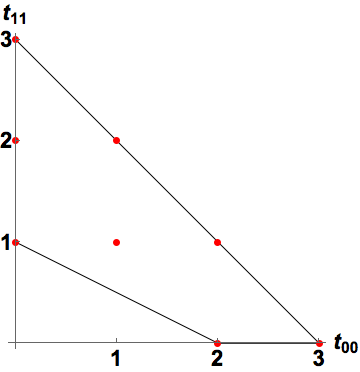}\\[6pt]   
  \   (a)\qquad  \qquad \qquad  \qquad   \qquad\quad(b) \qquad  \qquad \qquad \qquad \qquad  \qquad(c)
       \caption{\label{ParamPoly} The parameter polytope in the $t_0^{(0)}-t_1^{(1)}$ plane for 
   $\lambda=(5,6),\ \mu=(3,4)$ and      (a) $\nu=(5,6)$; (b) $\nu=(6,4)$; 
       (c)  $\nu=(2,10).$ }\end{figure}

\normalcolor
 Finally there are  cases where the polygon degenerates, either to a point (whenever $C_{\lambda\,\mu}^{\nu}=1$), or to a segment. 
 In the $A_r$ case, $C_{\lambda\,\mu}^{\nu}=1$  implies
 $\forall s, \ C_{s\lambda\,s\mu}^{s\nu}=1$, as proved in \cite{KTW01}.   Whether this holds true for other cases like $B_r$ seems to
 be still an open question. 
The polytope may also degenerate to a segment, as occurs for example when we keep $\lambda=(5,6),\ \mu=(3,4)$, but take $\nu=(0,10)$. Then we find $C_{\lambda\,\mu}^{\nu}=3$; the segment has length $2$,  and upon dilation $P_{\lambda\,\mu}^{\nu}(s)=2s+1$.


{ 
  \subsection{Volumes and multiplicities in practice}
 
  \subsubsection{Determination of multiplicities in the $B_2$ case}

In order to determine the 
generalized Littlewood-Richardson coefficients in the $B_2$ case, one can use the Racah--Speiser algorithm \cite{Racah-Speiser}
which works for any semisimple Lie algebra (even for affine Lie algebras),  or equivalently Klimyk's formula \cite{Klim}, which is implemented in several computer algebra packages such as LiE \cite{LiE}.
Another possibility, since the Kostant partition function is known for $B_2$ (see  \cite{Tarski},  \cite{Capparelli}), is to use the Steinberg formula \cite{Ste}.
A third possibility is to use Berenstein-Zelevinsky polytopes, as explained in sections \ref{sec:BZpol} and \ref{sec:BZ parameters}.
We implemented these three methods\footnote{In the $A_r$ cases we prefer to use our own version (O-blades, see \cite{CZ3}) of an algorithm using honeycombs, because it has an easy interpretation and because it is fast.} in Mathematica \cite{Mathematica}, which was also used for most formal manipulations done in this paper (and also for graphics).

 
 \subsubsection{The many ways to compute the volume of a BZ polytope}
\label{many-ways}
Let $(\lambda, \mu, \nu)$ be a  given  compatible triple of highest weights of $\gog$.
 We are interested in the (relative) volume $V$ of the associated BZ polytope.

There are --- at least --- four ways to compute this:\\
1) One can use the volume function ${\mathcal J}_r (\lambda, \mu, \nu)$ when it is explicitly known, as is the case for $A_r$, with $r=1,2,3,4$ (see \cite{Z1} and sect. 4 of \cite{CZ1}), and for $B_2$ via the expression (\ref{CJ2expl-alt}).\\
2) One can use  (\ref{In-LRp}) (also formula (36) of \cite{CZ1}) that expresses  $V$ in terms of a finite number of multiplicities (LR coefficients) and of the constants $\hat\c_{\kappa}$ (defined in sect. \ref{sectJ-LR}) associated to $\gog$.\\
3) One can determine the stretching quasi-polynomial defined by the triple, by calculating stretched multiplicities $C_{s\lambda\, s\mu}^{s\nu}$
up to some $s$ of the order of $\varpi \times d_r$ where $\varpi$ is the period of the quasi-polynomial (not more than $2$ for classical Lie algebras), and $d_r$ is the degree (see Table 1 in sect. \ref{sectVol}). Then $V$ is the coefficient of the leading-order term.\\
4) When the BZ-polytope is explicitly defined in terms of appropriate parameters (for instance the BZ-parameters described previously for $B_2$), one can compute its volume by integrating the constant function $1$ on the polytope,  possibly after eliminating redundant parameters.  When using the $\t_a$ parameters of (\ref{part-poly}),  this method will compute the Euclidean volume, so to obtain the relative volume one must divide by the covolume $c_\gog = \delta_r^{1/2}$, see Table 1.

For illustration, let us consider the following triple for $B_2$: $\lambda=(4,7), \mu=(5,3), \nu=(2,4)$ {in the basis of fundamental weights.  Equivalently in the $e_i$ basis, $\lambda = (15/2, 7/2), \mu = (13/2, 3/2), \nu = (4,2)$.}  This triple has multiplicity $5$. In what follows, coordinates of weights are written in the fundamental weight basis. \\
1) Direct evaluation of  (\ref{CJ2expl-alt}) with the above arguments gives  $V=7/4$.\\
2) The set $\hat K$ contains only one element,  $\kappa=(0,1)$, with $\c_{\kappa} = 1/4$.  The non-zero contribution to  (\ref{In-LRp}) comes from the following weights (with multiplicities) that enter the decomposition of the tensor product of {\footnotesize $(3,6)=(4,7)-(1,1)$} and {\footnotesize $(4,2)=(5,3)-(1,1)$}:  {\footnotesize $\{(0, 4), 1\}, \{(1,2), 1\}, \{(1,4), 3\}, \{(2,2), 2\}$}, so that one obtains $V=(1+1+3+2)/4 = 7/4$.
 More generally, for $\nu$ deep enough (see (\ref{deep-enough})), one finds 
$$ \CJ(\lambda,\mu; \nu)= \inv{4} \sum_k C_{(\lambda-\rho) (\mu-\rho) }^{\nu-k}\qquad 
\mathrm{with}\quad k\in \{ (2,0),\ (1,2),\ (1,0),\ (0,2)\}\,.$$\\
3)  For scaling factors $s=0, 2, 4$ the multiplicities are respectively $1, 13, 39$. For scaling factors $s=1, 3, 5$ the multiplicities are respectively $5, 24, 57$.
The quasi-polynomial reads $\frac{7 s^2}{4}+\frac{5 s}{2}+1$ when $s = 0 \,\rm{mod}\,2$ and  $\frac{7 s^2}{4}+\frac{5 s}{2}+\frac{3}{4}$ when $s = 1 \,\rm{mod}\,2$. As expected, the leading coefficient of both polynomials is $V=7/4$.
\\
4) In the ${t_0^{(0)}, t_1^{(1)}}$ plane, this BZ polytope is defined by the inequalities: \\
{\footnotesize $( t_1^{(1)} = 2 \textrm{ and }  t_0^{(0)} = 6) \textrm{ or } (2 <  t_1^{(1)} \leq 7/2 \textrm{ and } 
   10 - 2  t_1^{(1)} \leq  t_0^{(0)} \leq 8 -  t_1^{(1)}) \textrm{ or }  (7/2 <  t_1^{(1)} \leq 4 \textrm{ and } 
   3 \leq  t_0^{(0)} \leq 8 -  t_1^{(1)})$}. \\ It is displayed below (Fig. \ref{polytope475324}); note that it is not an integral polygon. Its area ($V=7/4$) can be readily calculated.\\
   
 \begin{figure}[htb]
  \centering
\includegraphics[width=10pc]{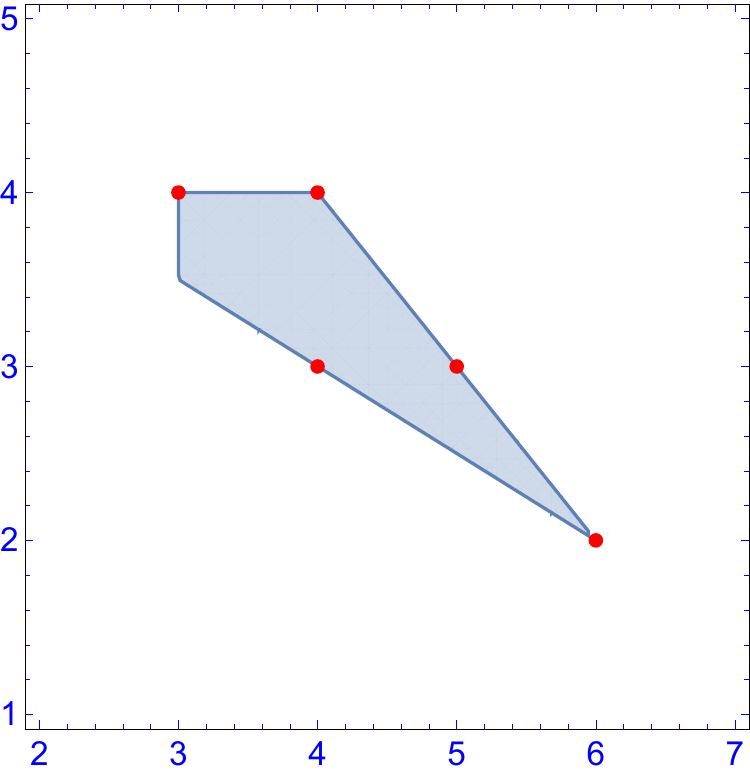}
 \caption{\label{polytope475324} The BZ polytope in the $(t_0^{(0)}, t_1^{(1)})$ plane for $\lambda=(4,7),\ \mu=(5,3),\ \nu=(2,4)$.}\end{figure}

\subsubsection{Determination of the coefficients $\c_\kappa$ and $\hat \c_\kappa$} 
As explained in sect. \ref{sectJ-LR}, 
 the formula (\ref{kissinger}) can be used to determine the constants $\c_\kappa$ and $\hat \c_\kappa$.   
Take for example  $\kappa = (0,0)$ in $B_2$; the triple $(\rho, \rho, \kappa + \rho)$ used in the first equation (\ref{kissinger}) is never compatible. 
One computes the multiplicity of the scaled triples $(s(1,1),s(1,1),s(1,1))$, which is $0$ when $s$ is odd (in particular for $s=1$), and equal to $(1, 4, 10, \ldots)$ when $s = 0, 2, 4, \ldots$; the quasi-polynomial is $0$ when $s$ is odd, and $\frac{3 s^2}{8}+\frac{3 s}{4}+1$ when $s$ is even. The leading coefficient obtained for $s$ even gives  $\c_{(0,0)}=3/8$, as desired.

Likewise for $B_3$, the coefficient $\c_\kappa$ associated with $\kappa = (0,0,0)$ can be determined from the scaled triples $(s(1,1,1),s(1,1,1),s(1,1,1))$. Here the period of the stretching quasi-polynomial is $4$ (a result which is not in contradiction with known theorems, since the triple is not compatible).  The LR quasi-polynomial vanishes for odd $s$, whereas when $s=0 \, \rm{mod} \, 4$ or $s=2 \, \rm{mod} \, 4$, one gets respectively 
$\frac{241 s^6}{3072}+\frac{241 s^5}{512}+\frac{4165 s^4}{3072}+\frac{19
   s^3}{8}+\frac{523 s^2}{192}+2 s+1$ or 
   $\frac{241 s^6}{3072}+\frac{241 s^5}{512}+\frac{4165 s^4}{3072}+\frac{281
   s^3}{128}+\frac{839 s^2}{384}+\frac{19 s}{16}+\frac{35}{64}$.
Therefore $\c_{(0,0,0)} =  241/3072 = 7230/92160$ as given in sect. \ref{sectJ-LR}. 
  Notice that here one has to use scaling factors up to $s=6\times4=24$ and $s=6\times 4+2=26$, to obtain the two non-trivial polynomial determinations of this quasi-polynomial.
  }


 \section*{Acknowledgements} 
We acknowledge fruitful discussions with P.~Di Francesco, P.~Etingof, V.~Gorin, R.~Kedem, S.~Sam and  M.~Vergne.
The work of Colin McSwiggen is partially supported by the National Science Foundation under Grant No.~DMS 1714187, as well as by the Chateaubriand Fellowship of the Embassy of France in the United States.

\section*{Appendix: Covolumes}

 Let us sketch how the covolume $c_\gog$ may be determined for the affine lattice $\Lambda$ of integer points of $\mathrm{aff}(H_{\lambda \mu}^\nu)$, in the case that $\dim H_{\lambda \mu}^\nu = d_r$ so that $\mathrm{aff}(H_{\lambda \mu}^\nu) = \mathrm{aff}(\mathrm{Part}_\gog(\sigma))$, see (\ref{part-poly}).
This affine span has codimension $r$ in the Euclidean space $\R^{N_r}$ generated as the formal span of the positive roots, which are taken to form a canonical basis, i.e. $\langle \Ga_a, \Ga_b\rangle = \delta_{ab}$, $a,b=1,\cdots, N_r$. 
Since the covolume of $\Lambda$
 is independent of the value of $\sigma$, we may take $\sigma=0$. Then $\Lambda$ is the lattice of integer combinations
 of the positive roots $\Ga_a$ subject to the condition 
 \be \label{cond-r}\sum_{a=1}^{N_r} \t_a \Ga_a =0\in  \mathfrak{t}^*.\ee
  Suppose  that the first $r$ of the $\Ga$'s are the simple roots, and denote by $A$ the $(N_r-r) \times r$ 
  matrix that expresses the non-simple roots in terms of the simple ones: 
  $$\Ga_a=\sum_{i=1}^r A_{ai}\Ga_i, \quad a=r+1,\hdots, N_r.$$
Then eliminate the parameters $u_i$ for $i=1,\hdots,r$ in the condition (\ref{cond-r}), using the relation
$u_i= - \sum_{a=r+1}^{N_r}  u_a A_{ai}$.  The points of $\Lambda$ can then all be written in the form
$\sum_{a=r+1}^{N_r}  u_a (-\sum A_{ai} \Ga_i +\Ga_a)$. {\it Under the assumption} that the parallelepiped spanned by the $N_r-r$ vectors $ w_a= (-\sum A_{ai} \Ga_i +\Ga_a)$ is a fundamental domain of $\Lambda$, the covolume of $\Lambda$ is equal to the volume of this parallelepiped, which we can compute as follows.
 The Gram matrix of the vectors $w_a$ in the Euclidean space  $\R^{N_r}$ reads
$$ G_{ab}=\langle w_a, w_b\rangle = \delta_{ab}+ \sum_{i=1}^r A_{ai}A_{bi}\,.$$
The covolume is then the square root of the determinant $\delta_r$ of $G$, $c_\gog=\delta_r^\oh$. 
We have computed these determinants for the classical algebras 
$A_r,\ B_r,\ C_r$ and $D_r$ up to $r=8$,  and the values of $\delta_r$ listed in Table 1 are extrapolations of these numbers. The values for the exceptional algebras  are also included in Table 1.
A last observation is that for  all the simple algebras, a single formula encompasses the expressions found in the table:
\be\label{last-formula}  
\delta_r=  \frac{(h^\vee)^r} {\det C} \,  \prod_{i=1}^r \frac{\langle \pmb{\theta}, \pmb{\theta} \rangle}{\langle\Ga_i,\Ga_i\rangle}
\,,\ee
where $C$ is the Cartan matrix, $h^\vee$ is the dual Coxeter number, $\pmb{\theta}$ is any long root, and the product runs over the simple roots.


 \end{document}